\newtheorem{theorem}{Theorem}[section]
\newtheorem{lemma}[theorem]{Lemma}
\newcommand{\mtimes}{{\mkern-2mu\times\mkern-2mu}}
\title{Symmetry and the salience of textures}
\author{
  Marconi Barbosa\thanks{\texttt{marconi.barbosa@anu.edu.au}}\hfill Ted Maddess\\
  Eccles Institute of Neuroscience\\
  The John Curtin School of Medical Research\\
  The Australian National University\\
  }
\begin{document}
\maketitle

\begin{abstract}
In this paper we investigate the role of symmetry in visual stimuli designed to probe human sensitivity to image statistics. Our starting point is a recently published parameter space, a point in which defines a family of binary texture images displaying a prescribed content of \nth{1}- to \nth{4}-order correlations among pixels in $2\mtimes2$ patches. We show that this parameter space can be represented by fewer variables, namely the \emph{orbit invariants} obtained by exploiting texture symmetry. Next we show how a class of locally countable texture statistics, the Minkowski functionals -- recently shown to be a proxy for human performance in texture discrimination tasks - can be written as a linear combination of the dihedral orbit invariants. Furthermore, by recasting these functionals as a combination of dihedral invariants, a generalization of these functionals can be obtained for textures of any number of grey-levels, patch sizes, or lattice types -- greatly reducing the number of dimensions/parameters needed to characterize the generated images. Orbit invariants may therefore provide a clue on the discrimination of these richer textures, as the ordinary Minkowski functionals do for binary textures. 
\end{abstract} 


\section*{Introduction}\label{sec:int}
The multisensory percept of {\it texture} is, in common language, frequently associated with the tactile or visual properties of surfaces and its finer spatial details~\cite{klatzkyMultisensoryTexturePerception2010}. Although many of the properties addressed in this paper would be shared by other senses, we are concerned here with textures as visual stimuli.  

Sensitivity to variations in texture is a prime aspect of early vision, if we consider for a moment the precarious optics and low visual information content arising from the environment. A clear advantage for creatures living in murky, turbid waters who can discern what is just a familiar level (and type) of noise -- so it can maintain its food searching routine, from a noise suggesting an inconspicuous threat may be lurking among the pixels. It turns out that visual texture perception or structural vision seems to create an evolutionary pressure, leading at least to some odd specializations~\cite{kreysingPhotonicCrystalLight2012}. 

In this study, we refer to a texture as a visual stimuli, algorithmically generated family of bidimensional binary images with similar, controlled, {\it level} of disorder. How conspicuous a texture would be if compared to a reference texture? The intensity of this relative percept is what we call here {\it salience}. Perhaps surprisingly, when the reference texture is generated by enforcing that all possible configurations of the $2\mtimes2$ patches are present uniformly, salience is not quantified by entropy, the trusted mathematical measure of disorder. So the perceptual distance of a texture from pure noise is a multidimensional concept. This suggest the visual system evolved developing neural mechanisms sensitive to structural properties in textures beyond (and perhaps other than) the standard notion of disorder. Despite a large body of research--across several disciplines such as psychophysics, neuroscience, and  computer vision, the spatial statistical properties intrinsic to the texture stimuli which make them conspicuous is not fully understood. 

\begin{figure*}
\centering
\includegraphics[trim = 0cm 0cm 0cm 0cm, clip,width = 0.3\linewidth]{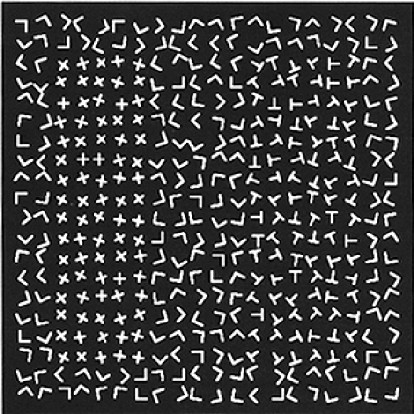}
\vspace{0.2cm}
\caption{Local spatial statistics of a texture are responsible for inducing salience. A) On the left of this image, there is a region containing randomly oriented five pixel correlation motif (X), and on the right a similar region containing a motif based on a four pixel (T) correlation. Relative to the background containing randomly rotated three pixel correlation motif (L) the region with five pixel correlation is more salient.\label{fig:saliency}}
\end{figure*}
 
It is known however that, for some synthetic and natural texture families, salience is associated with the proportion of third and higher order pixel correlations content in the texture family~\cite{tkacikLocalStatisticsNatural2010,juleszTextonsElementsTexture1981,caelliPerceptualAnalyzersUnderlying1978a}. 

We have reported on bounds for the number of mechanisms explaining human texture discrimination performance~\cite{seamonsLowerBoundNumber2015}. That result is supported by cross-validated outcomes of a large crowd-sourced psychophysics experiment of 121 participants~\cite{seamonsUnderlyingNeuralMechanisms2016}, and work on a set of texture families displaying 10 controlled image statistics up to 4th-order~\cite{victorPerceptualSpaceLocal2015}. Collectively these studies suggest 7 or fewer mechanisms operate to impart our sense of pixel correlations up to 4th-order. What aspects of texture content might these correspond to? We have provided evidence~\cite{barbosaLocallyCountableProperties2013,seamonsUnderlyingNeuralMechanisms2016}, that the strongest indicator of the salience for 33 different texture families are moments of the distributions of a special class of image functionals, namely the Minkowski functionals -- particularly of the Euler number ($\chi$). As we will show in this paper, these image functionals arise naturally in the characterization of symmetry in binary textures and provide a parsimonious way to define general texture content. Furthermore, we suggest alternative measures for more natural textures for which the Minkowski functionals are not well defined, broadening the role these image functionals might have in texture discrimination.  

A method originally described in~\cite{victorLocalImageStatistics2012,victorPerceptionSecondThirdorder2013,victorPerceptualSpaceLocal2015} produces textures with set pixel correlations of a particular order, while all other pixel correlations are as random as possible. This allows dissection of the human sensitivity to all orders and types of correlations defined for $2\mtimes2$ pixel patches. The probability of occurrence of these     
$2\mtimes2$ patches seems a natural set of parameters to control in the design of textures with the prescribed spatial statistical properties. But these parameters over represent the space $V_{n}^{q\mtimes q}$ with $q=2$ of possible binary ($n=2$) texture families. A multidimensional Fourier transform of these probabilities forms a dual space $\hat{V}_2^{2\mtimes 2}$, whose basis, the Fourier coefficients, consist of a smaller subset of texture statistics, once spatial stability constraints and consistency are enforced. We will show in this paper how this number can be reduced further by observing symmetries intrinsic to these textures and even further, as far as the visual system is concerned, if we adopt recent empirical evidence~\cite{barbosaLocallyCountableProperties2013,seamonsLowerBoundNumber2015}.

This paper has four main contributions. Firstly it provides a pathway for dimensionality reduction based on symmetry. Secondly it relates the restricted set of coordinates obtained with previous work showing they are relevant to perceptual salience. Furthermore, this framework works unchanged with any number of grey-levels, neighbourhood sizes and lattice symmetries, providing a natural extension for the Minkowski Functionals (and explanatory models of discrimination) for richer textures with multiple grey-levels. Finally we also illustrate the use of these invariants and of functions defined on orbits in the characterization of all the axial (single controlled coordinate) and planar (mixtures) binary textures generated by the methods in~\cite{victorLocalImageStatistics2012,victorPerceptionSecondThirdorder2013,victorPerceptualSpaceLocal2015}.

\section*{Methods}

\subsection*{Textures with specified spectrum}

A framework for synthesizing textures with a specified structural (correlations of up to 4 pixels, nearest neighbors) content was introduced in~\cite{victorLocalImageStatistics2012,victorPerceptualSpaceLocal2015}. Textures were generated by maximum entropy principle with specific constraints controlling for each type of correlation. An exhausting process for coordinate types (correlations) and their mixtures is described in detail on~\cite{victorPerceptualSpaceLocal2015}. The nomenclature for describing the types correlations and coordinates is given in Figure~\ref{tb:quads}. We use the same terminology and symbols for continuity. Central to this mixing procedure is the \emph{doughnut algorithm} which swaps pixels with identical neighbourhoods preserving short range correlations but disrupting long range and higher-order correlations. See Figure~\ref{fig:axial} for examples of how the appearance of these axial textures change when a single coordinate changes value. Likewise, see Figure~\ref{fig:planar}for composite, planar textures, resulting from a modulation of two coordinates. We will study these compositions later on in terms (to be defined) of symmetry: their dihedral (orbital) entropy and invariance.

\begin{figure*}
\centering
\includegraphics[trim = 0cm 0cm 0cm 0cm, clip,width = 0.5\linewidth]{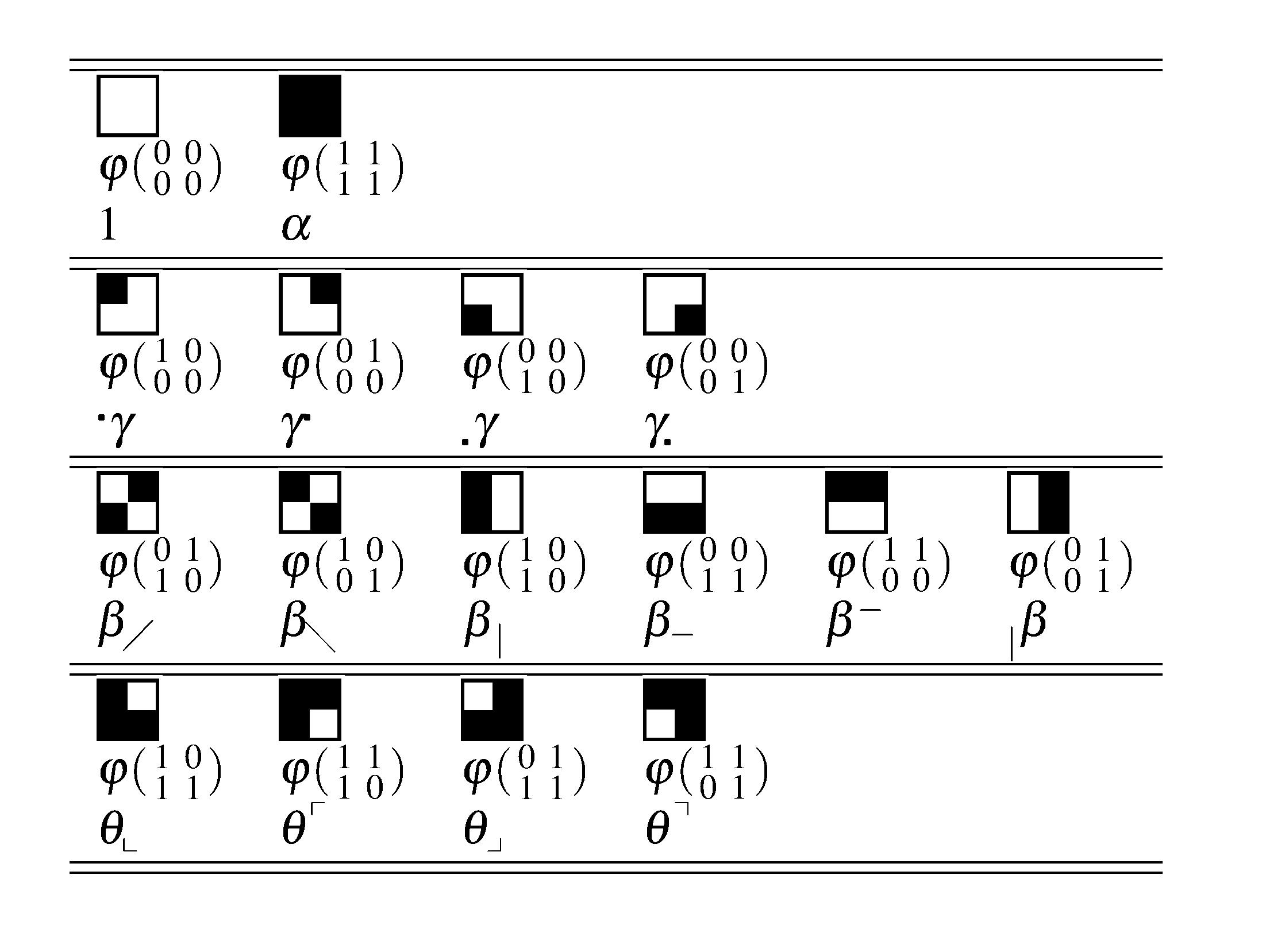}
\caption{Coordinates of $V_2^{2\mtimes2}$ representing different orders (see~\cite{victorLocalImageStatistics2012}) of correlation: $\gamma, $\nth{1}; $\beta$, \nth{2}; $\theta$, \nth{3}; $\alpha$, \nth{4}.\label{tb:quads}}
\end{figure*}

\begin{figure}[htbp]
\centering
\includegraphics[trim = 0cm 0cm 0cm 1cm, clip,width = 0.85\linewidth]{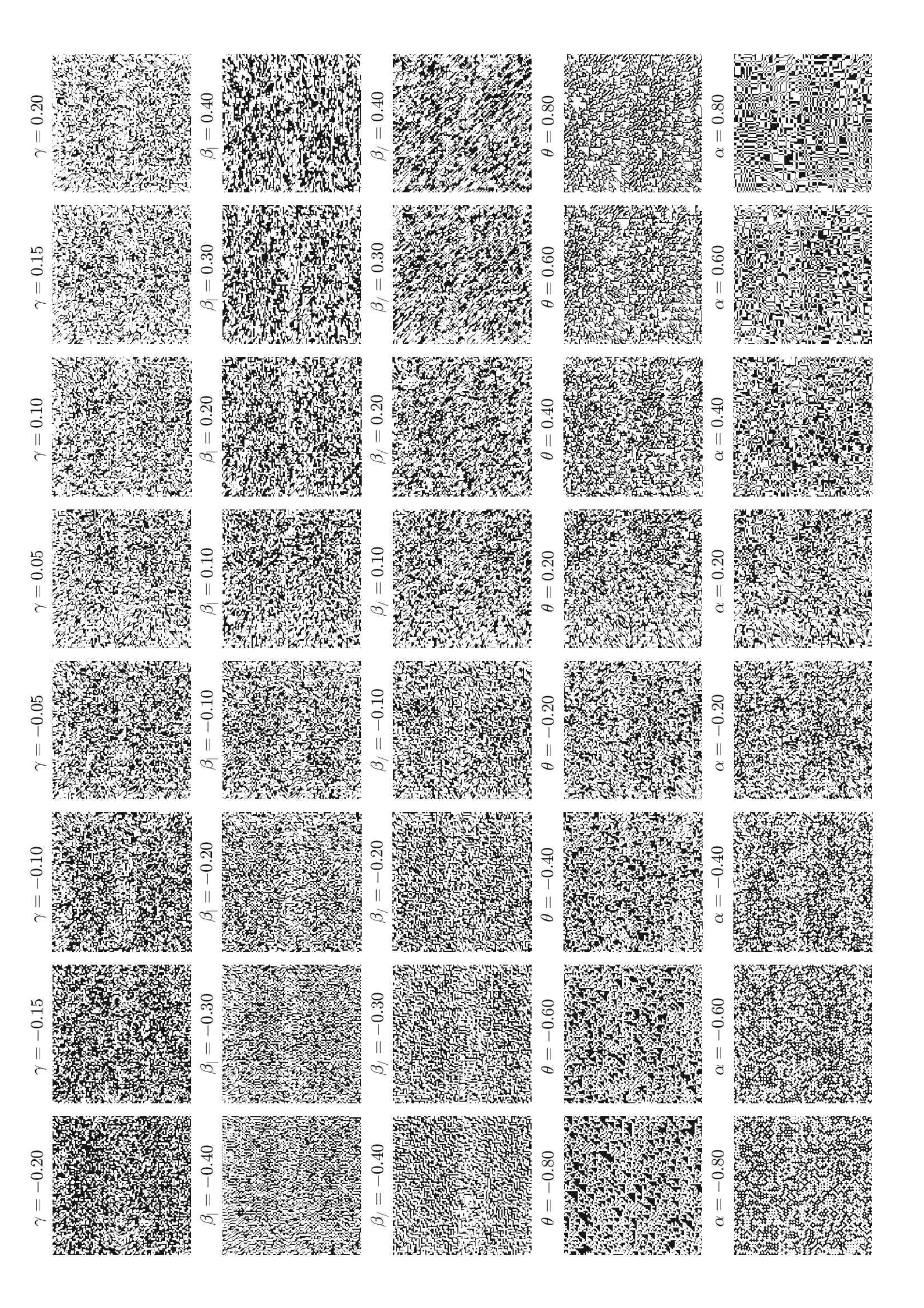}
\caption{Samples of single coordinate (axial) textures representing directions in the block pixel constrained texture space. A value of 0 represents the origin (binary noise) while 1 represent pure textures with no noise.~\label{fig:axial}}
\end{figure}

\begin{figure}[htbp]
\centering
\includegraphics[trim = 0cm 0cm 1cm 1cm, clip,width = 0.88\linewidth]{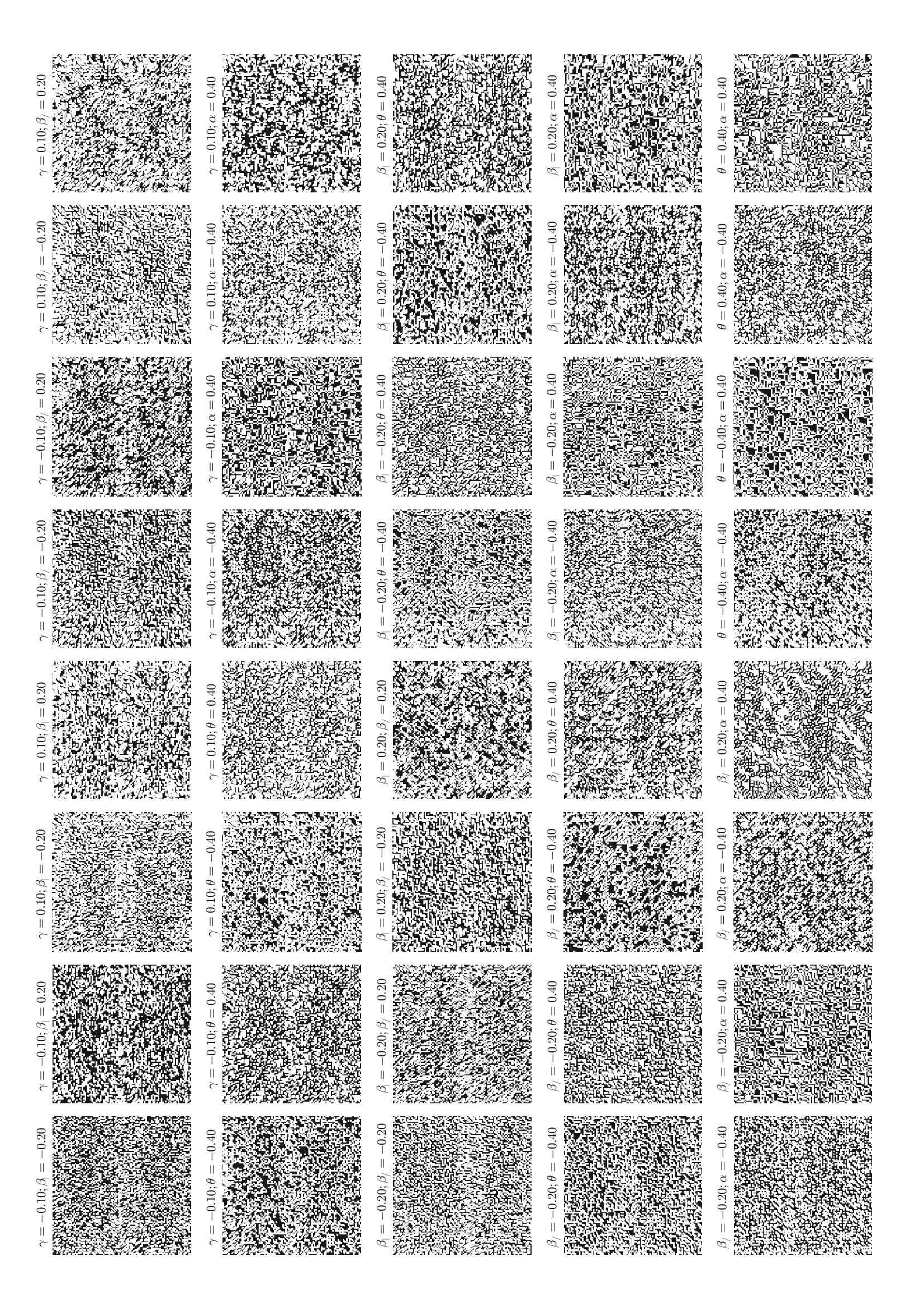}
\caption{Samples of planar textures. These textures are formed by mixing two axial directions, defining a plane in $V_2^{2\mtimes2}$, and hence are referred to as planar textures.\label{fig:planar}}
\end{figure}

\subsection*{Salience descriptors}
We are ultimately interested in features of textures which could be calculated for many grey-levels and larger neighbourhoods. We want these features to also have predictive power to explain the human discrimination performance of these textures. Using all coordinates of the perception space is prohibitive as their cardinality grows steeply with the number of grey-levels and pixel patch size, see Table~\ref{tb:coords-orbits}. For binary textures, it has been shown that moments of the Minkowski functionals are a good proxy of human texture discrimination performance~\cite{barbosaLocallyCountableProperties2013}, but its definition in terms of neighbouring pixel correlations or probability of pixel patches is confined to binary form.
Because of the fuzzy nature of connectivity in grey-level images, various proposed generalizations seem equally justifiable~\cite{soilleGenuineConnectivityRelations2007,braga-netoGrayscaleLevelConnectivity2004,heijmansTheoreticalAspectsGraylevel1991,wangDigitalConnectivityExtended1997}. With the explicit relationship between the Minkowski functionals and orbit invariants, we can latch on to a concept that is well defined for any number of grey-levels and patch size, expecting them to also be predictive of Human perception in future experiments with grey-level textures. 

\subsection*{Orbits}
We follow a slimlined approach to be able to state the main results without much technicalities. Details of relevant group theoretic formalism is postponed to a support information section in the end. But here we only state the definition of the orbit of an object under the action of a group $G$,  $G(x) = Orb(x) = \{y\in X| y = g*x\}$.  That is, $G(x)$ is the set of all possible destinations of object $x$ under the group action.

\begin{table}[htbp]
\centering
\begin{tabular}{l|l|l}\hline
$n$\textbackslash  $N$ 	& $2$  		& $3$ \\ \hline\hline
2            		  	& 16-6    	&  512-102\\ \hline
3            			& 81-21    	&  19683-2862\\ \hline
4            			& 256-55    &  262144-? \\\hline\hline
\end{tabular}
\vspace{0.5cm}
\caption{Number of coordinates (first number) and of pure dihedral orbits, for a combination of grey-levels($n$) and size of the neighbourhood ($N$).\label{tb:coords-orbits}}
\vspace{0.5cm}
\end{table}

Table~\ref{tb:coords-orbits} shows how the number of coordinates and orbits scale with more grey-levels and larger neighbourhoods. Table~\ref{tb:orbits} shows the action of the group $D_4^2 = D_4\times K^2$ on the basis of the perception space $V_2^{2\mtimes2}$, $\varphi(\begin{smallmatrix} s_1 & s_2\\ s_3 & s_4 \end{smallmatrix})$. The first column in the left shows the index for each of the coordinates of $V_2^{2\mtimes2}$, they are displayed in a different order from how they appears in Equations~\ref{eq:jvtransform}, to make orbits explicit. We show on the right side of the table the cycle notation for the alphabetically labelled elements of $D_4^2$. The last row shows the cardinality of the fixed space for each group element -- the sum of the number of fixed coordinates, which are shown in bold face. Using Theorem~\ref{lem:Burns} we obtain by summing up the last row and dividing by the corresponding group order, the $6$ orbits for $D_4$ group (including dashed lines divisions) which merges into $4$ orbits (separated by solid lines only) when contrast reversal is included (group $D_4^2$).

\begin{table}[htbp]
\vspace{0.5cm}
\parbox{.7\linewidth}{
\scalebox{0.7}{
\centering
{\fontfamily{cmtt} \selectfont \small
\begin{tabular}{|ll|llllllllllllllllll|}
\hline
\multicolumn{2}{|c|}{Orbit} & ind & quad & a & b & c & d & e & f & g & h & a* & b* & c* & d* & e* & f* & g* & h* \\ \hline
\multirow{ 2}{*}{$\mathcal{O}_0^1$} & $\mathcal{O}_0^0 $ & $1$ & \scalebox{0.8}{$\begin{bsmallmatrix}
0 & 0 \\
0 & 0
\end{bsmallmatrix}$} & $\mathbf{1}$ & $\mathbf{1}$ & $\mathbf{1}$ & $\mathbf{1}$ & $\mathbf{1}$ & $\mathbf{1}$ & $\mathbf{1}$ & $\mathbf{1}$ & $16$ & $16$ & $16$ & $16$ & $16$ & $16$ & $16$ & $16$ \\\cdashline{2-20}
& $\mathcal{O}_5^0$ & $16$ & \scalebox{0.8}{$\begin{bsmallmatrix}
1 & 1 \\
1 & 1
\end{bsmallmatrix}$} & $\mathbf{16}$ & $\mathbf{16}$ & $\mathbf{16}$ & $\mathbf{16}$ & $\mathbf{16}$ & $\mathbf{16}$ & $\mathbf{16}$ & $\mathbf{16}$ & $1$ & $1$ & $1$ & $1$ & $1$ & $1$ & $1$ & $1$ \\ \hline
\multirow{ 8}{*}{$\mathcal{O}_1^1$} & \multirow{ 4}{*}{$\mathcal{O}_1^0$} & $2$ & \scalebox{0.8}{$\begin{bsmallmatrix}
1 & 0 \\
0 & 0
\end{bsmallmatrix}$} & $\mathbf{2}$ & $5$ & $3$ & $4$ & $5$ & $3$ & $4$ & $\mathbf{2}$ & $12$ & $13$ & $14$ & $15$ & $13$ & $14$ & $15$ & $12$ \\
& & $3$ & \scalebox{0.8}{$\begin{bsmallmatrix}
0 & 1 \\
0 & 0
\end{bsmallmatrix}$} & $\mathbf{3}$ & $\mathbf{3}$ & $5$ & $5$ & $4$ & $2$ & $2$ & $4$ & $14$ & $14$ & $13$ & $13$ & $15$ & $12$ & $12$ & $15$ \\
& & $4$ & \scalebox{0.8}{$\begin{bsmallmatrix}
0 & 0 \\
1 & 0
\end{bsmallmatrix}$} & $\mathbf{4}$ & $\mathbf{4}$ & $2$ & $2$ & $3$ & $5$ & $5$ & $3$ & $15$ & $15$ & $12$ & $12$ & $14$ & $13$ & $13$ & $14$ \\
& & $5$ & \scalebox{0.8}{$\begin{bsmallmatrix}
0 & 0 \\
0 & 1
\end{bsmallmatrix}$} & $\mathbf{5}$ & $2$ & $4$ & $3$ & $2$ & $4$ & $3$ & $\mathbf{5}$ & $13$ & $12$ & $15$ & $14$ & $12$ & $15$ & $14$ & $13$ \\ \cdashline{2-20}
& \multirow{ 4}{*}{$\mathcal{O}_4^0$} & $12$ & \scalebox{0.8}{$\begin{bsmallmatrix}
0 & 1 \\
1 & 1
\end{bsmallmatrix}$} & $\mathbf{12}$ & $13$ & $14$ & $15$ & $13$ & $14$ & $15$ & $\mathbf{12}$ & $2$ & $5$ & $3$ & $4$ & $5$ & $3$ & $4$ & $2$ \\
& & $13$ & \scalebox{0.8}{$\begin{bsmallmatrix}
1 & 1 \\
1 & 0
\end{bsmallmatrix}$} & $\mathbf{13}$ & $12$ & $15$ & $14$ & $12$ & $15$ & $14$ & $\mathbf{13}$ & $5$ & $2$ & $4$ & $3$ & $2$ & $4$ & $3$ & $5$ \\
& & $14$ & \scalebox{0.8}{$\begin{bsmallmatrix}
1 & 0 \\
1 & 1
\end{bsmallmatrix}$} & $\mathbf{14}$ & $\mathbf{14}$ & $13$ & $13$ & $15$ & $12$ & $12$ & $15$ & $3$ & $3$ & $5$ & $5$ & $4$ & $2$ & $2$ & $4$ \\
& & $15$ & \scalebox{0.8}{$\begin{bsmallmatrix}
1 & 1 \\
0 & 1
\end{bsmallmatrix}$} & $\mathbf{15}$ & $\mathbf{15}$ & $12$ & $12$ & $14$ & $13$ & $13$ & $14$ & $4$ & $4$ & $2$ & $2$ & $3$ & $5$ & $5$ & $3$ \\ \hline
\multirow{ 4}{*}{$\mathcal{O}_2^1$} & \multirow{ 4}{*}{$\mathcal{O}_2^0$} & $6$ & \scalebox{0.8}{$\begin{bsmallmatrix}
0 & 0 \\
1 & 1
\end{bsmallmatrix}$} & $\mathbf{6}$ & $9$ & $9$ & $8$ & $8$ & $\mathbf{6}$ & $11$ & $11$ & $8$ & $11$ & $11$ & $\mathbf{6}$ & $\mathbf{6}$ & $8$ & $9$ & $9$ \\
& & $8$ & \scalebox{0.8}{$\begin{bsmallmatrix}
1 & 1 \\
0 & 0
\end{bsmallmatrix}$} & $\mathbf{8}$ & $11$ & $11$ & $6$ & $6$ & $\mathbf{8}$ & $9$ & $9$ & $6$ & $9$ & $9$ & $\mathbf{8}$ & $\mathbf{8}$ & $6$ & $11$ & $11$ \\
& & $9$ & \scalebox{0.8}{$\begin{bsmallmatrix}
1 & 0 \\
1 & 0
\end{bsmallmatrix}$} & $\mathbf{9}$ & $6$ & $8$ & $\mathbf{9}$ & $11$ & $11$ & $6$ & $8$ & $11$ & $8$ & $6$ & $11$ & $\mathbf{9}$ & $\mathbf{9}$ & $8$ & $6$ \\
& & $11$ & \scalebox{0.8}{$\begin{bsmallmatrix}
0 & 1 \\
0 & 1
\end{bsmallmatrix}$} & $\mathbf{11}$ & $8$ & $6$ & $\mathbf{11}$ & $9$ & $9$ & $8$ & $6$ & $9$ & $6$ & $8$ & $9$ & $\mathbf{11}$ & $\mathbf{11}$ & $6$ & $8$ \\ \hline
\multirow{ 2}{*}{$\mathcal{O}_3^1$} & \multirow{ 2}{*}{$\mathcal{O}_3^0$} & $7$ & \scalebox{0.8}{$\begin{bsmallmatrix}
1 & 0 \\
0 & 1
\end{bsmallmatrix}$} & $\mathbf{7}$ & $\mathbf{7}$ & $10$ & $10$ & $\mathbf{7}$ & $10$ & $10$ & $\mathbf{7}$ & $10$ & $10$ & $\mathbf{7}$ & $\mathbf{7}$ & $10$ & $\mathbf{7}$ & $\mathbf{7}$ & $10$ \\
& & $10$ & \scalebox{0.8}{$\begin{bsmallmatrix}
0 & 1 \\
1 & 0
\end{bsmallmatrix}$} & $\mathbf{10}$ & $\mathbf{10}$ & $7$ & $7$ & $\mathbf{10}$ & $7$ & $7$ & $\mathbf{10}$ & 	$7$ & $7$ & $\mathbf{10}$ & $\mathbf{10}$ & $7$ & $\mathbf{10}$ & $\mathbf{10}$ & $7$ \\\hline\hline
\multicolumn{2}{|c}{$|X^g|$}& & & $16$ & $8$ & $2$ & $4$ & $4$ & $4$ & $2$ & $8$ & $0$ & $0$ & $2$ & $4$ & $4$ & $4$ & $2$ & $0$ \\ \hline
\end{tabular}}}}
\hspace{1.5cm}
\parbox{0.01\linewidth}{
\scalebox{0.8}{
\centering
{\fontfamily{cmtt} \selectfont \small
\begin{tabular}{ll}
a & $(id)$ \\
b & $(1,7)(2,8)$ \\
c & $(1,3,7,5)(2,4,8,6)$ \\
d & $(1,5)(2,6)(3,7)(4,8)$ \\
e & $(1,7)(2,8)(3,5)(4,6)$ \\
f & $(1,3)(2,4)(5,7)(6,8)$ \\
g & $(1,5,7,3)(2,6,8,4)$ \\
h & $(3,5)(4,6)$ \\
a* & $(1,2)(3,4)(5,6)(7,8)$ \\
b* & $(1,8)(2,7)(3,4)(5,6)$ \\
c* & $(1,4,7,6)(2,3,8,5)$ \\
d* & $(1,6)(2,5)(3,8)(4,7)$ \\
e* & $(1,8)(2,7)(3,6)(4,5)$ \\
f* & $(1,4)(2,3)(5,8)(6,7)$ \\
g* & $(1,6,7,4)(2,5,8,3)$ \\
h* & $(1,2)(3,6)(4,5)(7,8)$ \\
\end{tabular}
}}
}\vspace{0.5cm}
\caption{Binary orbits for the action of elements of the dihedral group $D_{4}$ and the product $D_4\mtimes K^2$. Orbits $\mathcal{O}_j^1$ for the group $D_4\mtimes K^2$ are separated by a solid line. The pure $D_4$ orbits, $\mathcal{O}_j^0$, are separated by dashed and solid lines. Elements containing a contrast reversal, the non-trivial element in $K^{2}$, are marked with $*$. \label{tb:orbits}}\vspace{0.3cm}
\end{table}

Table~\ref{tb:coords-orbits} was generated by brute force, explicitly constructing all orbits using a computer. But techniques from group theory allow us to obtain a closed form for the number of orbits. We illustrate here use of the Lemma~\ref{lem:Burns} in order to derive a general formula for $n$ grey-levels in a $2\mtimes2$  neighbourhood.

\begin{theorem}[Orbit Counting]\label{lem:Burns}
The number of orbits of a group $G$ acting on a set $X$ is the average number of fixed points in $X$ by the action of elements of $G$.
\end{theorem}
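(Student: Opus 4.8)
\section*{Proof proposal for the Orbit Counting Theorem}

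The plan is to prove the identity $|X/G| = \tfrac{1}{|G|}\sum_{g\in G} |X^g|$, where $X^g=\{x\in X : g*x=x\}$ is the fixed set of $g$ and $X/G$ is the set of orbits, by a double-counting argument applied to the set of incidences $S=\{(g,x)\in G\times X : g*x=x\}$. First I would evaluate $|S|$ by fibering over the first coordinate: for each $g$ the fibre is exactly $X^g$, so $|S|=\sum_{g\in G}|X^g|$, which is the right-hand side up to the factor $|G|$. Then I would evaluate $|S|$ again by fibering over the second coordinate: for each $x$ the fibre is the stabilizer $G_x=\{g\in G : g*x=x\}$, so $|S|=\sum_{x\in X}|G_x|$.

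The second step is where the group structure does the work. I would invoke the orbit--stabilizer relation $|G_x|\cdot|\mathrm{Orb}(x)|=|G|$, which follows from the bijection between left cosets of $G_x$ in $G$ and the points of $\mathrm{Orb}(x)$ given by $hG_x\mapsto h*x$ (well-definedness and injectivity are exactly the statement that $h_1*x=h_2*x \iff h_2^{-1}h_1\in G_x$, and surjectivity is immediate). Substituting gives $\sum_{x\in X}|G_x| = |G|\sum_{x\in X}\tfrac{1}{|\mathrm{Orb}(x)|}$.

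Finally I would reorganize the last sum by grouping the points of $X$ according to which orbit they lie in. Since the orbits partition $X$, and since for a fixed orbit $\mathcal{O}$ every $x\in\mathcal{O}$ contributes $1/|\mathcal{O}|$ with exactly $|\mathcal{O}|$ terms, each orbit contributes a total of $1$ to $\sum_{x\in X}\tfrac{1}{|\mathrm{Orb}(x)|}$. Hence that sum equals the number of orbits $|X/G|$, and combining the two evaluations of $|S|$ yields $\sum_{g\in G}|X^g| = |G|\cdot|X/G|$, which rearranges to the claimed average. I do not anticipate a genuine obstacle here; the one point that must be handled with care — rather than an actual difficulty — is making the orbit--stabilizer bijection explicit, since it is the only place finiteness of $G$ and the coset decomposition are used, and it is what converts the combinatorial double count into the stated arithmetic mean.
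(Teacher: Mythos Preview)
Your proposal is correct and follows essentially the same argument as the paper's own proof: both interchange the order of summation in $\sum_{g\in G}\sum_{x:gx=x}1$ (your ``double counting'' of incidence pairs), apply the orbit--stabilizer relation to convert $\sum_x |G_x|$ into $|G|\sum_x 1/|\mathrm{Orb}(x)|$, and then regroup the latter sum orbit by orbit so that each orbit contributes exactly $1$. The only cosmetic difference is that the paper quotes orbit--stabilizer and Lagrange as separate lemmas without proof, whereas you sketch the coset bijection explicitly.
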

\begin{proof}
See Supplementary Information.
\end{proof}
We display in Table~\ref{tb:coords-orbits-closed} the number of fixed coordinates by the action of the Dihedral group $D_4$. On the left side of this table the coordinates are divided by their order and types.
\begin{table}[htbp]
\vspace{0.5cm}
\centering
\begin{tabular}{|l|l|l|l|l|l|l|l|l|l|l}\hline
 $x$ \textbackslash $g$
& $e$ 	      & $r$  & $r^{-1}$ & $r^2$      & $h$      & $v$         & $s$        &  $b$  \\ \hline\hline
$\mathbin{^{\centerdot}\gamma},\gamma^{\centerdot},\mathbin{_{\centerdot}\gamma},\gamma_{\centerdot} $
& $4(n-1)$    & $0$  & $0$      & $0$        & $0$      & $0$         & $2(n-1)$   &  $2(n-1)$ 		\\ \hline
$\beta_{-},\beta^{-}$
& $2(n-1)^2$  & $0$  & $0$      & $0$        & $0$      & $2(n-1)$    & $0$        &  $0$    		\\ \hline
$\mathbin{_{}^{}\beta}_{|},\mathbin{_|^{}\beta}_{}$	
& $2(n-1)^2$  & $0$  & $0$      & $0$        & $2(n-1)$ & $0$         & $0$        &  $0$    		\\ \hline
$\beta_{\diagup},\beta_{\diagdown}$
& $2(n-1)^2$  & $0$  & $0$      & $2(n-1)$   & $0$      & $0$         & $n(n-1)$   &  $n(n-1)$     	\\ \hline
$\theta^{\ulcorner},\theta^{\urcorner},\theta_{\llcorner},\theta_{\lrcorner}$
& $4(n-1)^3$  & $0$  & $0$      & $0$        & $0$      & $0$         & $2(n-1)^2$ &  $2(n-1)^2$    \\ \hline
$\alpha$
& $(n-1)^4$   & $(n-1)$  & $(n-1)$      & $(n-1)^2$   & $(n-1)^2$ & $(n-1)^2$ & $(n-1)^3$ &  $(n-1)^3$ \\ \hline
$\varphi\scalebox{0.8}{$\begin{bsmallmatrix}
0 & 0 \\
0 & 0
\end{bsmallmatrix}$}$
& $1$ & $1$ & $1$ & $1$  & $1$ & $1$ & $1$ & $1$ \\ \hline \hline
$|X^g|$ & $n^4$ &  $n$ & $n$& $n^2 $ & $n^2$ &$n^2 $ & $n^3 $& $n^3$
\\ \hline
\end{tabular}
\vspace{0.5cm}
\caption{Number of coordinates, for each type, fixed by the elements of dihedral group $D_4$, namely: the identity $e$, $\frac{\pi}{2}$-rotation $r$, its inverse $r^{-1}$, $\pi$-rotation $r^2$, ($h$)orizontal and ($v$)ertical reflections, ($s$)lash and ($b$)ackslash  diagonal reflections.\label{tb:coords-orbits-closed}}
\end{table}
Summing each column of Table~\ref{tb:coords-orbits-closed}, we arrive at the last row of this table which shows cardinality of the fixed subspaces of each element of $g$ of $G$. Then summing up this row, i.e all fixed subspaces for each element of $G$ and dividing by the order of $G$ we have
\begin{equation}
\frac{1}{|G|}\sum_{g\in G}|X^g| = \frac{n(n^3 + 2n^2 + 3n + 2)}{8}
\end{equation}
By Theorem~\ref{lem:Burns}, this number is also the number of orbits $ |X/G|$ for $n$ grey-levels.
We see from Figure~\ref{fig:scaling} that $|X/G|$ grows significantly slower than the number of coordinates. \emph{This suggests that functions constant over orbits are one way to reduce the number of descriptors in texture discrimination}. Also, the number of orbits decreases even further if contrast inversion is allowed.  We will see supporting evidence of this in the subsequent sections, when we show the relationship between these invariant orbit functions with the Minkowski functionals.
\begin{figure}[htbp]
\centering
\includegraphics[trim = 0cm 0cm 0cm 0cm, clip,width = .75\linewidth]{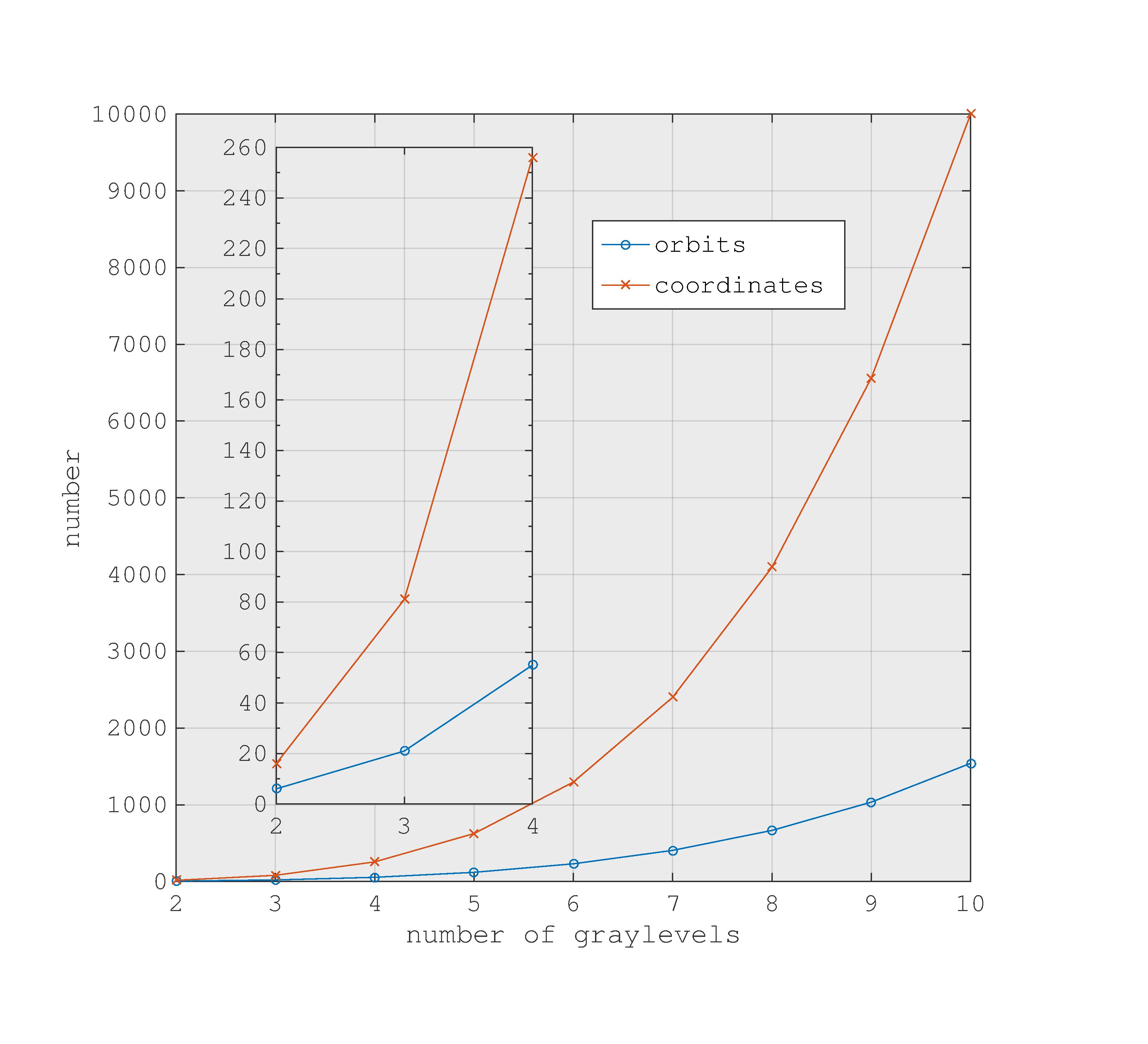}
\caption{Scaling of the number of orbits and coordinates as a function of the number of grey-levels in  textures defined by $2\mtimes2$ pixel patches. The insert shows a zoomed portion of the plot near the origin. \label{fig:scaling}}
\end{figure}

\begin{figure}[htbp]
\centering
\includegraphics[trim = 0cm 0cm 0cm 0cm, clip,width = .75\linewidth]{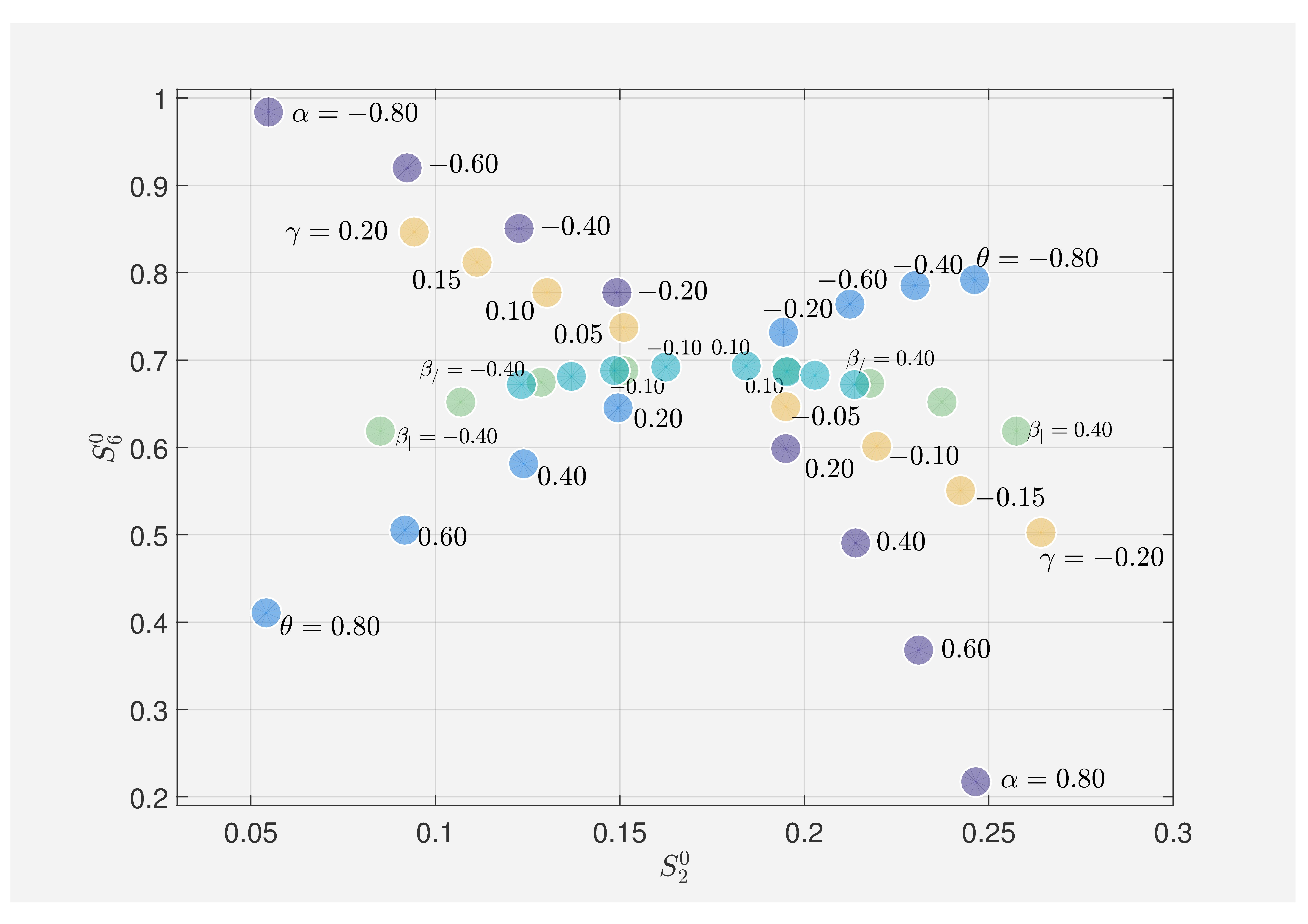}
\caption{Scatter plot of dihedral entropies $S_{2}^{0},S_{6}^{0}$, see definition in Equation~\ref{eq:partent}, for all single coordinate (axial) textures of Figure~\ref{fig:axial}. The superscript $0$ indicates lack of contrast reversal. For clarity, only the first and last steps are labelled for $\beta_{|}$  and $\beta_{/}$; steps shown in $0.1$ increments in the respective coordinate value, from $-0.4$ to $0.4$. The central region in this picture is of higher standard BGSvN entropy.\label{fig:scatteraxial}}
\end{figure}

\begin{figure}[htbp]
\centering
\includegraphics[trim = 0cm 0cm 0cm 0cm, clip,width = .75\linewidth]{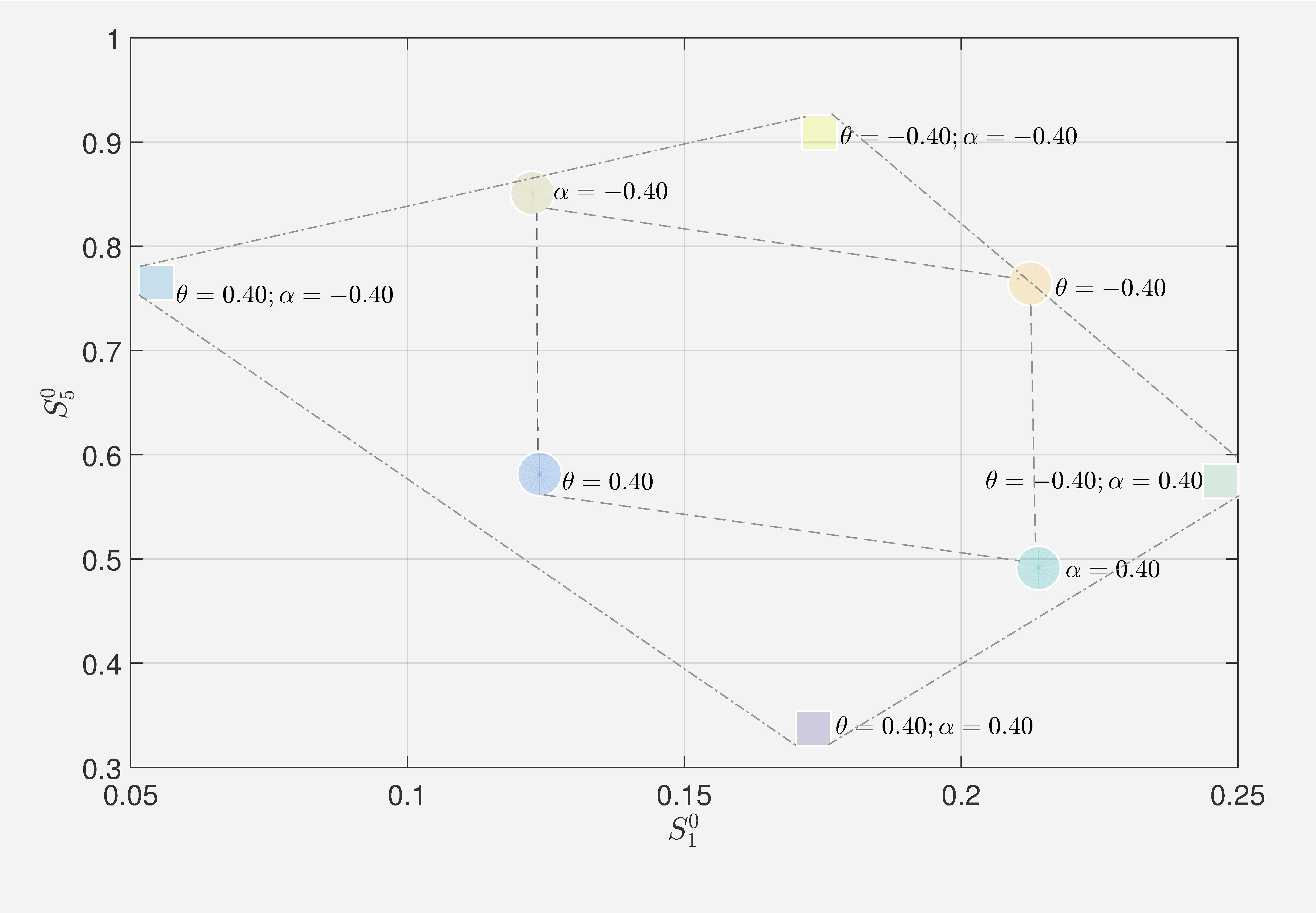}
\caption{Dihedral entropy ($S_1^0, S_5^0$, see definition in Equation~\ref{eq:partent}), for a particular composition of textures. The superscript $0$ indicates lack of contrast reversal. These salience descriptors are calculated for single coordinates textures (examples in Figure~\ref{fig:axial}), displayed here by discs. For the planar textures, compositions of axial textures~ Figure~\ref{fig:planar}, the entropies is displayed here with squares.\label{fig:scattercomb}}
\end{figure}

\subsection*{Orbit Entropy.}
We describe in this section image features related to symmetry which are descriptive of texture structure. The Boltzmann-Gibbs-Shannon-von Neumann (BGSvN) entropy $S=-\sum_i p_i\ln p_i$ is the only functional of pixel patch probability $p_i$ whose maximization leads to a bias free estimation of the underlying probability density $p$, in additive systems. Forays in generalizing the concept of entropy for non-additive systems, usually with application to highly correlated (long range) complex systems away from equilibrium, have been proposed by various researchers, but not without controversy~\cite{presseNonadditiveEntropiesYield2013,presseNonadditiveEntropyMaximization2014,tsallisConceptualInadequacyShore2015,tsallisThermodynamicsMorePowerful2014}.

In this study we can safely use both concepts. On the one hand, when we are interested solely in a consistent measure of disorder, possibly a useful mechanism for the visual system in comparing textures, there is no {\it a priori} reason to reject a form of entropy with a free parameter, particularly when the usual form is recovered for some special value of that parameter. On the other hand, while generating ensembles of textures using maximum entropy principle (e.g. as in~\cite{victorLocalImageStatistics2012}), we stick with the usual form, as convergence is already complex to characterize and we are assuming equilibrium. That is not to say maximization of other forms of entropy won't potentially lead to texture families with interesting psychophysical properties. It is an open question to understand what fundamental properties of a texture ensemble we gain (if any) and known property we loose (if any) in such a context. Therefore we propose an alternative approach to quantify disorder in texture. We consider only the BGSvN entropy for now, but we are interested in the partial contribution from each orbit, to the total entropy. We define these orbit entropies as
\begin{equation}\label{eq:partent}
S_r^b = -\sum_{i \in O^{G}_r}p_i\ln p_i,
\end{equation}
where the Boolean subscript $b$ stands for the presence of contrast reversal in $G$. The index $r$ in $S_r^b$ corresponds to the orbits $\mathcal{O}_r^b$ displayed in Table~\ref{tb:orbits}. Figure~\ref{fig:scatteraxial} shows the dispersion for the pure dihedral ($b = 0$) entropies of $G = D_4$, $\mathcal{S}_2^{0},\mathcal{S}_6^{0}$ for all textures in Figure~\ref{fig:axial}. Figure~\ref{fig:scattercomb} shows the pure dihedral entropies $\mathcal{S}_1^{0}, \mathcal{S}_5^{0}$ for a selected combination of textures shown in Figure~\ref{fig:planar}. For reference, Figure~\ref{fig:entbarplot} in Support Information displays all the information necessary to make a similar graph for all the other combinations.

\subsection*{Orbit Invariants}
 \label{sec:invs}
As shown in Table~\ref{tb:coords-orbits} there are $r = 6$ invariants to the action of the dihedral group $D_4$. We label these invariants $I^0_{\{r=1\cdots6\}}$, with the Boolean superscript $0$ indicating the absence of contrast reversal in the symmetry group. They are given in the Fourier domain by
\begin{center}
\begin{minipage}[t]{\textwidth}
{\footnotesize
\begin{align}
I_0^0 & = \bar{\alpha}
	  =    \varphi(\begin{smallmatrix} 0 & 0\\ 0 & 0 \end{smallmatrix}) \label{eq:i00}\\
I_1^0 & =    \mathbin{^{\centerdot}\gamma} +\gamma^{\centerdot}+\mathbin{_{\centerdot}\gamma} +\gamma_{\centerdot}
	  =    \varphi(\begin{smallmatrix} 1 & 0\\ 0 & 0 \end{smallmatrix})
		  +\varphi(\begin{smallmatrix} 0 & 1\\ 0 & 0 \end{smallmatrix})
		  +\varphi(\begin{smallmatrix} 0 & 0\\ 1 & 0 \end{smallmatrix})
		  +\varphi(\begin{smallmatrix} 0 & 0\\ 0 & 1 \end{smallmatrix})\label{eq:i10}\\  	
I_2^0 & =    \beta^{-} + \beta_{-} + \mathbin{_|^{}\beta}_{} +\mathbin{_{}^{}\beta}_{|}
 	  =    \varphi(\begin{smallmatrix} 1 & 1\\ 0 & 0 \end{smallmatrix})
	      +\varphi(\begin{smallmatrix} 0 & 0\\ 1 & 1 \end{smallmatrix})
	      +\varphi(\begin{smallmatrix} 1 & 0\\ 1 & 0 \end{smallmatrix}) 
	      +\varphi(\begin{smallmatrix} 0 & 1\\ 0 & 1 \end{smallmatrix})\label{eq:i20}
	      \\
I_3^0 & =    \beta_{\diagup} + \beta_{\diagdown}
	  =    \varphi(\begin{smallmatrix} 0 & 1\\ 1 & 0 \end{smallmatrix})
		  +\varphi(\begin{smallmatrix} 1 & 0\\ 0 & 1 \end{smallmatrix})\label{eq:i30}\\	
I_4^0 & =    \theta^{\ulcorner}+\theta^{\urcorner}+\theta_{\llcorner}+\theta_{\lrcorner}
	  =    \varphi(\begin{smallmatrix} 1 & 1\\ 1 & 0 \end{smallmatrix})
		  +\varphi(\begin{smallmatrix} 1 & 1\\ 0 & 1 \end{smallmatrix})
		  +\varphi(\begin{smallmatrix} 1 & 0\\ 1 & 1 \end{smallmatrix})
		  +\varphi(\begin{smallmatrix} 0 & 1\\ 1 & 1 \end{smallmatrix})\label{eq:i40}\\
I_5^0 & = \alpha = \varphi(\begin{smallmatrix} 1 & 1\\ 1 & 1 \end{smallmatrix})\label{eq:i50}.		
\end{align}
}
\end{minipage}
\end{center}
With the addition of contrast reversal $K^{2}$, the symmetry group becomes $D_4^2 = D_{4}\mtimes K^2$, we label these invariants $I^1_{\{r=1\cdots4\}}$. They are written in the original probability space as in~\cite{victorPerceptualSpaceLocal2015}
\begin{center}
\begin{minipage}[t]{\textwidth}  
{\footnotesize
\begin{align}
I_0^1 & =  I_0^0 + I_5^0
	=	 2p(\begin{smallmatrix} 0 & 0\\ 0 & 0 \end{smallmatrix})
		+2p(\begin{smallmatrix} 1 & 1\\ 0 & 0 \end{smallmatrix})
		+2p(\begin{smallmatrix} 1 & 0\\ 1 & 0 \end{smallmatrix})
		+2p(\begin{smallmatrix} 1 & 0\\ 0 & 1 \end{smallmatrix})
		+2p(\begin{smallmatrix} 0 & 1\\ 1 & 0 \end{smallmatrix})
		+2p(\begin{smallmatrix} 0 & 1\\ 0 & 1 \end{smallmatrix})
		+2p(\begin{smallmatrix} 0 & 0\\ 1 & 1 \end{smallmatrix})
		+2p(\begin{smallmatrix} 1 & 1\\ 1 & 1 \end{smallmatrix}) \label{eq:i01}
	    \\
I_1^1 & = I_1^0 +I_4^0  =  8p(\begin{smallmatrix} 0 & 0\\ 0 & 0 \end{smallmatrix})
		-8p(\begin{smallmatrix} 1 & 1\\ 1 & 1 \end{smallmatrix}) \label{eq:i11}
	     \\
I_2^1 & =4p(\begin{smallmatrix} 0 & 0\\ 0 & 0 \end{smallmatrix})
	    -4p(\begin{smallmatrix} 0 & 1\\ 1 & 0 \end{smallmatrix})
	    -4p(\begin{smallmatrix} 1 & 0\\ 0 & 1 \end{smallmatrix})
	    +4p(\begin{smallmatrix} 1 & 1\\ 1 & 1 \end{smallmatrix}) \label{eq:i21}	
		\\
I_3^1 & =2p(\begin{smallmatrix} 0 & 0\\ 0 & 0 \end{smallmatrix})
	    -2p(\begin{smallmatrix} 1 & 1\\ 0 & 0 \end{smallmatrix})
		-2p(\begin{smallmatrix} 1 & 0\\ 1 & 0 \end{smallmatrix})
		-2p(\begin{smallmatrix} 1 & 0\\ 0 & 1 \end{smallmatrix})
		-2p(\begin{smallmatrix} 0 & 1\\ 1 & 0 \end{smallmatrix})
		-2p(\begin{smallmatrix} 0 & 1\\ 0 & 1 \end{smallmatrix})
		-2p(\begin{smallmatrix} 0 & 0\\ 1 & 1 \end{smallmatrix})
		+2p(\begin{smallmatrix} 1 & 1\\ 1 & 1 \end{smallmatrix})\label{eq:i31}						
\end{align}
}
\end{minipage}
\end{center}
The index $r$ in these invariants $I_r^b$ corresponds to the index of the orbits $\mathcal{O}_r^b$ displayed in Table~\ref{tb:orbits}. The forward Fourier transform is shown in the Supporting Information, Equation~\ref{eq:jvtransform} and its inverse given in Equation~\ref{eq:jvitransform}. Figure~\ref{fig:invbarplot} shows the values for these invariants for all axial binary textures and their planar combination. This information is a reference to visualize their dispersion for all the axial textures, like the examples shown in Figure~\ref{fig:axial} and for all two combination (planar) textures, like the example shown in Figure~\ref{fig:planar}.

\subsection*{Minkowski Functionals}
The Minkowski functionals are also often called additive functionals, intrinsic (or generalized) volumes, or yet Quermassintegrals. They are either geometrical or topological global quantities that can be calculated by summing up contributions of small building blocks and exploiting their additivity. We refer the reader to our previous work showing the relevance of these functionals in explaining human discrimination performance for binary textures~\cite{barbosaLocallyCountableProperties2013}. An in depth study of the mathematical properties and review of the broad applicability in science and engineering of these functionals can be found in~\cite{michielsenIntegralgeometryMorphologicalImage2001,kapferLocalAnisotropyFluids2010,mantzUtilizingMinkowskiFunctionals2008,knufingFractalAnalysisMethods2005,schroder-turkMinkowskiTensorsAnisotropic2010,schroder-turkMinkowskiTensorShape2011,schneiderStochasticIntegralGeometry2008}. 

Considering a binary planar image as an union of disjoint convex open bodies, these functionals consist of shape measures of a flat object composed of contiguous regions (one labelled pixels) and holes (zero labelled pixels). For the case at hand, $d = 2$, the 3 Minkowski functionals are the total \emph{area} of the pattern $M_{1}$, its \emph{perimeter} $M_{2}$, and a topological concept called \emph{the Euler number}, $M_3$ or $M_{4}$ if we consider 4 or 8 nearest neighbors respectively. The Euler number is the number of connected components minus the number of holes in a pattern. For binary textures, the Minkowski functionals can be expressed explicitly in terms of the $D_4$ orbit invariants. First note (see~\cite{barbosaLocallyCountableProperties2013}) that they represent oblique directions in the perception space $V_2^{2\mtimes2}$

\begin{align}  
&M_1 =\frac{1 -\gamma}{2} \\
&M_2 -1 = - 2(\beta_{-}+\beta_{|}) \\
&M_3 -\frac{1}{16} = \frac{1}{16}
					\left[ - 4 \gamma -2\beta_{-} -2\beta_{|}
							+   \beta_{\diagup}+\beta_{\diagdown}
							+   \theta_{\ulcorner}+\theta_{\urcorner}+\theta_{\llcorner}+\theta_{\lrcorner}
							+   \alpha
					\right]\\
&M_4 +\frac{1}{16} = \frac{1}{16}
					\left[ - 4 \gamma +2\beta_{-} +2\beta_{|}
							-   \beta_{\diagup}-\beta_{\diagdown}
							+   \theta^{\ulcorner}+\theta^{\urcorner}+\theta_{\llcorner}+\theta_{\lrcorner}
							-   \alpha
					\right], 										
\end{align}
where $M_1$ is the total area, $M_2$ the perimeter and $M_{3,4}$ the Euler number for 4/8-connected neighbourhoods. This can be expressed in terms of the orbit invariants of Equations~\ref{eq:i00} to~\ref{eq:i50} as

\begin{align}
&M_1 =\frac{I_0^0 -I^0_1/4}{2}\\
&M_2 = - I^0_2 +I^0_1\\
&M_3 = \frac{1}{16}
					\left[+I^0_0 - I^0_1 -I^0_2
							+   I^0_3
							+   I^0_4
							+   I^0_5
					\right]\label{eq:x4}\\
&M_4 = \frac{1}{16}
					\left[-I^0_0 - I^0_1 +I^0_2
							-   I^0_3
							+   I^0_4
							-   I^0_5
					\right]\label{eq:x8}. 										
\end{align}

\section*{Results}
The present study was motivated by the fact that human performance in discriminating binary textures seems to rely on moments of the distribution of the Minkowski functionals -- the Euler Number, and to a lesser extent the area and perimeter. This paper carries an important mathematical result: the three Minkowski functionals for planar images can be expressed as a linear combination of the orbit invariants of the dihedral group $D_4$. With the addition of contrast reversal operation, only the \emph{total Perimeter} remains a invariant of the direct product group $D_4\mtimes K^2$. This result means that we can calculate a surrogate to the Minkowski functionals for a broader class of textures, because the orbit invariants are well defined for any neighbourhood size, lattice and number of grey levels. 

We expect this group theoretic generalization of the Minkowski functionals will have a similar prominence in explaining the human discrimination of \emph{grey-level} textures as the standard Minkowski functionals have for \emph{binary} textures. 

In this paper we presented an example of how to characterize texture salience with substantially fewer variables, namely orbit entropies; in other words, the coefficients of the Fourier transform of pixel patch probabilities over-represent the perception space. Further, this dimensionality reduction based on symmetry also helps us to create a feasible experimental design. We can use the same crowd-based software infrastructure, as used in~\cite{seamonsUnderlyingNeuralMechanisms2016}, in order to collect evidence to justify our expectations.

\section*{Discussion}

The Minkowski functionals have been associated with human discrimination performance for \emph{binary} textures both in controlled lab conditions and in a large crowd-sourced experiment~\cite{barbosaLocallyCountableProperties2013,seamonsLowerBoundNumber2015}. From our psychophysical experiments, the moments of the distribution of the area and perimeter had less predictive power of human texture discrimination compared with the Euler number. The set of binary textures defined for the correlations within $2\mtimes2$ patches seems to be described by either 4 or 6 orbit invariants. This seems to be in good agreement with the number of texture discrimination putative mechanisms used by humans~\cite{seamonsUnderlyingNeuralMechanisms2016,victorPerceptualSpaceLocal2015}. It will be interesting to see if this extends to richer textures created for more grey-levels, patch sizes and lattice types. As an example application, the introduced orbit entropies provide an economic way to represent the relationships between these textures in Figure~\ref{fig:axial} and their compositions in Figure~\ref{fig:planar}.

\section*{Conflicts of Interest}
The authors declare no conflicts of interest.

\section*{Funding}
This research was partly funded by Australian Research Council grant (LP140100763). This Australian government grant is 25\% funded by an industry partner, nuCoria Pty Ltd. There are no patents under IP under licence to nuCoria related to the ARC grant. This research was also supported by the Australian Research Council through the ARC Centre of Excellence in Vision Science (CE0561903).

\section*{Supporting Information}
\label{sec:App}

\subsection*{Groups: Conception and Perception.}

A set of transformations $S$ and a law of composition $o$, is called a \emph{group} $G = (S,o)$ if four simple properties hold. The binary operation $o$ between any two elements of $S$ must be \emph{closed} and \emph{associative}, the set $S$ must have an \emph{identity} transformation and every element of the set $S$ must have an \emph{inverse} transformation. It is our everyday experience with the Euclidean space (an infinite group) that imbue our minds with the intuition of a group of operations, as put earlier by Helmholtz and Poincar\'e,~\cite{cassirerConceptGroupTheory1944} even before the mathematical concepts involved were fully formalized and the breadth of their consequences realized. The theory of groups, together with the associated ideas of invariance and equivalence, lead to a diverse body of applications in very practical problems. There were many early studies highlighting aspects of group theory in the general study of perception~\cite{hoffmanLieAlgebraVisual1966,dodwellLieTransformationGroup1983,hoffmanHigherVisualPerception1970} and of visual discrimination of textures~\cite{caelliPerceptualAnalyzersUnderlying1978a,caelliPerceptualAnalyzersUnderlying1978,caelliPsychophysicalEvidenceGlobal1979} to name a few.

\subsection*{Group action.} A group of transformations can act on a set $X$ of objects in various ways. The group \emph{action} is a description of the mechanism used by each group element to transform an object. We can create a set of objects by colouring a single object with different colours. Consider the sequence of integer numbers of length $N$, $O = \{a_1,a_2,a_3,\cdots,a_N\}$. Consider one particular set of sequences with each number restricted to binary tokens (say black or white). In this case, objects in this set are binary sequences of length $N$.

The action of a group element on a sequence may be, for example, by permuting the positions in the sequence or by cycling through the possible colours. This may produce a new colouring or not. If this action does not change that colouring, the sequence $x$ of $X$ is called a fixed point of the corresponding group element action. All sequences $x$ fixed by $g$ form a (fixed) subspace of $X$. Conversely, distinct points related by an element $g$ of $G$ form an \emph{orbit} of $g$, and the union of all these orbits reconstruct $X$. Formally, we define the set of all fixed points by an element of $G$ by $X^g=\{x\in X|gx=x\}$ and the set of all orbits by $X/G=\{G(x) | x \in X \}$. The exact relationship between these two quantities -- the number of orbits and the cardinality of fixed subspaces will be made explicit shortly, here we just mention that these are the main constructs we need from group theory.

In this paper, the group action has an extra geometric aspect on top of the purely algebraic action described on plain sequences example above. To be concrete as possible, in this paper objects will be sequences of length $4$, rearranged in quads ($2\mtimes2$ pixel patches), but the group action still boils down to permutations or cycles, albeit also carrying the geometrical meaning of rotations or reflections on the quads when $G$ is the dihedral group.

\subsection*{Concrete group action on binary texture space.}
\label{sec:ps}

The elements of the group $G$ act on the probability of pixel patches by permuting pixel positions or cycling pixel grey-levels. In this paper the action of a generic element of $g$ of $G$ takes the following concrete form on the probability of $2\mtimes2$ pixel patches
\begin{equation}
\label{eq:base}
g\cdot p\left(\begin{matrix}\{k_1,k_2\} & \{k_3,k_4\}\\ \{k_5,k_6\} & \{k_7,k_8\} \end{matrix} \right) = p\left(\begin{matrix} k_1 & k_3\\ k_5 & k_7 \end{matrix} \right).
\end{equation}
The reason for the above (unusual) concrete definition is to mimic software implementation. Here if $k_{1}$ is one, $k_{2}$ is zero, same for the other pairs. To revert to probabilities, only the first value of the sequence is of relevance. Therefore, the combined action of the elements of the group $A = C_2^{\mtimes4}$, the cartesian product of 4 cyclic groups, on $p(\begin{smallmatrix}0 & 0\\ 0 & 0 \end{smallmatrix})$ is  
\begin{equation}
\label{eq:quads}
 \{a_1\cdots a_k\}\cdot p(\begin{smallmatrix}\{0,1\} & \{0,1\}\\ \{0,1\} & \{0,1\} \end{smallmatrix})\rightarrow\left\{p(\begin{smallmatrix}0 & 0\\ 0 & 0 \end{smallmatrix}),p(\begin{smallmatrix}1 & 0\\ 0 & 0 \end{smallmatrix}), p(\begin{smallmatrix}0 & 1\\ 0 & 0 \end{smallmatrix}),...,p(\begin{smallmatrix}1 & 1\\1 & 1\end{smallmatrix}) \right\},
 \end{equation}
and generates (iteratively) the set of all coordinates for $V_2^{2\mtimes2}$. The result of this action creates an arbitrary ordering for the coordinates of the space, following the order of the group elements in $\{a_1\cdots a_k\}$.  In the next session, we will relate this ordering with previously published notation~\cite{victorLocalImageStatistics2012} and show the implications of other particular choices.

There are two other relevant groups for this study, namely the Dihedral group $D_4$, and the group of contrast reversals. Binary contrast reversal $K^2$ is a group with just two operators, the identity and the cycling $c$ of grey-levels for all pixel positions \emph{at the same time}, in cycles notation this transformation is $c = (1,2)(3,4)(5,6)(7,8)$, and the action on pixel patch probability is

\begin{equation}
\label{eq:ci-action}
c\cdot p\left(\begin{matrix}\{k_1,k_2\} & \{k_3,k_4\}\\ \{k_5,k_6\} & \{k_7,k_8\} \end{matrix} \right) =
p\left(\begin{matrix}\{k_2,k_1\} & \{k_4,k_3\}\\ \{k_6,k_5\} & \{k_8,k_7\} \end{matrix} \right).
\end{equation}
Similarly a $\frac{\pi}{2}$ clockwise rotation, one of the 8 elements of the Dihedral group $D_4$, is given in cycles by $r = (1,3,7,5)(2,4,8,6)$, the action is
\begin{equation}
\label{eq:rot-action}
r \cdot p\left(\begin{matrix}\{k_1,k_2\} & \{k_3,k_4\}\\ \{k_5,k_6\} & \{k_7,k_8\} \end{matrix} \right) =
p\left(\begin{matrix}\{k_5,k_6\} & \{k_1,k_2\}\\ \{k_7,k_8\} & \{k_3,k_4\} \end{matrix} \right).
\end{equation}

\subsection*{Group representations.}

The abstract definition of a transformation group is connected to a concrete form by specifying a group action on an object. From the \emph{mechanics} of the group action a set of matrices can be established. These matrices \emph{represent} group elements, in the sense that the product of these matrices obeys the same multiplication table as the abstract group elements themselves. This set of matrices constitute a \emph{representation} of $G$. There are multiple representations for the same abstract group and not necessarily with the same dimension. In general, for applications (and the case here) one is interested in a specific action and a set of representations associated with it.

Many group actions will generate representations that can be decomposed into smaller irreducible or absolutely irreducible representations (depending on if the matrices are real or complex). These representations can be decomposed into a direct sum of these irreducible building block representations, called \emph{irreps}, in which the matrices for all elements of the group are in the same block diagonal form. A special representation, called the \emph{regular} representation offers such a decomposition and importantly, with all existing irreducibles displayed as repeated blocks -- their multiplicities (number of repeats) being equal to their dimension.

The decomposition into irreducibles is the central information for applications of group representation theory, to quantum mechanics, statistics, optics, engineering, ranking, encryption, and as in here, to biology and medicine~\cite{chirikjianEngineeringApplicationsNoncommutative2000,barbosaDifferentialInvariantsSymplectic2004,wolfSymmetryMultistabilityLongRange2005,golubitskySymmetryMethodsMathematical2015,vianaSymmetryStudiesRefraction2014,vianaDihedralFourierAnalysis2013,kondorRankingKernelsFourier2010a}

\subsection*{Representations of direct product of groups.}
To construct the Fourier transform of a function of group elements, we will need to be able to construct all irreducible representations of that group. In the present case, the group of interest is a direct product of smaller groups and its representations are easy to construct in terms of the representations of the smaller groups.

Matrix representations of the direct product of two groups $\mathcal{H}_1$ and $\mathcal{H}_2$ are simply the direct product of the matrix representations of each group
\begin{equation}
\rho^{ij}\left(G=\mathcal{H}_1\mtimes\mathcal{H}_2\right) = \rho^{i}\left( \mathcal{H}_1 \right) \otimes \rho^{j}\left(\mathcal{H}_2\right)
\end{equation}
The matrix elements for this representation are given by
\begin{equation}
\label{eq:dirpro}
\rho^{ij}_{pq;rs}(g) = \rho^{i}_{pr}(h_1)\rho^{j}_{qs}(h_2), \quad \text{with} \quad g=h_1h_2.
\end{equation}
Now we use the fact that the direct product of cyclic groups is also a cyclic group. These groups are commutative (Abelian) and thus have unidimensional irreps. Therefore we use the index $i$ for each irrep $\rho$ for the single number $\rho^{i}(g)$ \emph{representing} the element $g$ in the $i$-th representation of the group $G$. For the cyclic group, $C_n = \{1,g,g^2,\cdots,g^{n-1}\}$, there is a simple closed form for all the $n$ irreducible representations, namely
\begin{equation}
\rho^{p}(g^m) = e^{\frac{m(p-1)}{n}2\pi i}.
\end{equation}
Note that in the formula above we wrote the factor $p-1$ to insure conformity with the usual convention that the first irreducible representation is the identity representation, the number $1$ for all elements of $A$, satisfying the multiplication rules of the group trivially. We rewrite this as
\begin{equation}
\label{eq:rep}
\rho^{s}(g^a) = \omega^{-sa}\quad \mathrm{where} \quad s = p-1, \quad a = m, \quad \omega = e^{-\frac{2\pi i}{n}},
\end{equation}
where we use symmetric indexes $s,a \in \{0,...,n-1\}$.

\subsection*{Fourier Transform on Finite Groups.}
\label{sec:ftfg}
The Fourier transform of a function $f(g), g \in G$ is given by a collection of coefficients indexed by a particular irreducible representation $\rho^{i}$ of $G$,
\begin{equation}
\label{eq:fft}
\hat{f}(\rho^{i}) = \sum_{g\in G}f(g)\rho^{i}(g^{-1}).
\end{equation}
The inverse transformation is given by
\begin{equation}
\label{eq:ifft}
f(g) = \sum_{j=1}^{\alpha} d_j \mathrm{trace} [\hat{f}(\rho^{j})\rho^{j}(g)],
\end{equation}
where $\alpha$ is the the number of distinct irreps of $G$ and $d_j$ is the multiplicity of the irrep $\rho^j$ in the decomposition of the regular representation of $G$.

\subsection*{Fourier Transform on Abelian groups.}
\label{sec:ftag}
Using the representations of $C_n$ in Equation~\ref{eq:rep}, applying recursively Equation~\ref{eq:dirpro} for the representations of the direct product of cyclic groups, and using the definition of the Fourier transform given by Equation~\ref{eq:fft},  we arrive at
\begin{equation}
\label{eq:afft}
\hat{f}(\mathbf{s}) = \sum_{\{a_k=0\}_{k\in\{1,2,\cdots,N\}}}^{n-1}f(\mathbf{a})\omega^{\mathbf{s}\cdot\mathbf{a}}.
\end{equation}

For binary images we have $n=2$ and $N=4$, and $\omega = -1$. By relabelling the generic function $f$ for probability $p$ of $2\mtimes 2$ pixel patches $p$, and writing $\varphi = \hat{p}$, we have

\begin{equation*}
\hat{f}(\mathbf{s}) = \varphi(\begin{smallmatrix} s_1 & s_2\\ s_3 & s_4 \end{smallmatrix}), \quad
f(\mathbf{a}) = p(\begin{smallmatrix} a_1 & a_2\\ a_3 & a_4\end{smallmatrix}),
\end{equation*}

and

\begin{align}\label{eq:ftonC4}
\varphi(\begin{smallmatrix} s_1 & s_2\\ s_3 & s_4 \end{smallmatrix})&=
\sum_{a_{1\cdots4}\in\{0,1\}}p(\begin{smallmatrix} a_1 & a_2\\ a_3 & a_4\end{smallmatrix})(-1)^{\mathbf{s}\cdot\mathbf{a}}.
\end{align}
We show in the next section this forward Fourier Transform explicitly in Equations~\ref{eq:jvtransform} and its inverse in Equation~\ref{eq:jvitransform}. We refer to the space spanned by the basis $\varphi(\begin{smallmatrix} s_1 & s_2\\ s_3 & s_4 \end{smallmatrix})$ as the~\emph{perception space}. 

\subsection*{The Orbit Counting Theorem}\label{sec:oct}
In this section we show the relationship between the number of orbits and the average number of fixed points. This relationship has been referred to as the Cauchy-Frobenious Theorem, the orbit counting Theorem or, colloquially, as the Lemma that is not Burnside's. We state first two useful lemmas without proof.

\begin{lemma}[Lagrange's Theorem]\label{lem:Lagrange}
For any finite group $G$, the order of every subgroup $H$ of $G$ divides the order of $G$
\begin{equation}
n|H| = |G|.
\end{equation}
for some integer $n$.
\end{lemma}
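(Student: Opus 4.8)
The plan is to prove Lagrange's Theorem by the classical coset decomposition argument. The central objects are the left cosets $gH = \{gh : h \in H\}$ for $g \in G$. First I would show that these cosets partition $G$. Since the identity transformation lies in $H$ (a subgroup axiom), every $g \in G$ belongs to its own coset $gH$, so the cosets cover $G$. And if two cosets $aH$ and $bH$ share a common element, say $ah_1 = bh_2$ with $h_1, h_2 \in H$, then $b^{-1}a = h_2 h_1^{-1}$, which lies in $H$ by the closure and inverse axioms; from this one deduces $aH = bH$, again using only closure. Hence any two cosets are either identical or disjoint.

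Next I would establish that every coset has exactly $|H|$ elements. For a fixed $g$, the map $\phi_g : H \to gH$ given by $\phi_g(h) = gh$ is surjective by the very definition of $gH$, and injective because $gh_1 = gh_2$ forces $h_1 = h_2$ after left-multiplication by $g^{-1}$ (the cancellation law, which follows from associativity together with the existence of inverses). So $\phi_g$ is a bijection and $|gH| = |H|$ for every $g$.

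Finally, let $n$ denote the number of distinct left cosets — a finite integer since $G$ is finite — which is by definition the index $[G:H]$. The partition of $G$ into $n$ pairwise disjoint blocks, each of cardinality $|H|$, yields $|G| = n|H|$, which is precisely the asserted identity. I do not expect a genuine obstacle: the argument is elementary. The only points that need care are the step where sharing a single element forces equality of the two cosets — this is exactly where the three non-trivial subgroup properties (closure, identity, inverses) get used and should be spelled out — and maintaining consistency by working with left cosets throughout, so that the count of blocks and the count of elements per block combine correctly.
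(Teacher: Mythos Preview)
Your argument is the standard, correct coset-partition proof of Lagrange's Theorem: cosets cover $G$, distinct cosets are disjoint, each coset is in bijection with $H$, and counting yields $|G| = n|H|$ with $n = [G:H]$. There is no gap.

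As for comparison with the paper: the paper explicitly states this lemma \emph{without proof} (``We state first two useful lemmas without proof''), so there is no argument in the paper to compare against. Your proposal therefore goes beyond what the paper provides. If anything, your write-up is more than is needed here, since the paper only invokes Lagrange as a background fact en route to the Orbit Counting Theorem.
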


Define $G_x = Stab(x) = \{g\in G|g*x = x\}$ as the stabilizer of $x$ by $G$. Also, let $G(x) = Orb(x) = \{y\in X| y = g*x\}$ be the orbit of $x$, i.e. all possible destinations of $x$ under the group action. We have
\begin{lemma}[Orbit-Stabilizer Theorem] The size of the orbit of $x$ is the order of the group $G$ divided by the size of the stabilizer of $x$. \label{lem:os}
\begin{equation}
|G(x)| = \frac{|G|} {|G_x|}.
\end{equation}
\end{lemma}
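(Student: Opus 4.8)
The plan is to exhibit an explicit bijection between the orbit $G(x)$ and the set of left cosets of the stabilizer $G_x$ in $G$, and then invoke Lagrange's Theorem (Lemma~\ref{lem:Lagrange}) to count those cosets. First I would check that $G_x = \{g\in G\mid g*x = x\}$ is genuinely a subgroup of $G$: the identity fixes $x$; if $g,h$ fix $x$ then so does $gh$; and if $g$ fixes $x$ then applying $g^{-1}$ to $g*x=x$ shows $g^{-1}$ fixes $x$. This is what licenses the use of Lagrange on the pair $(G,G_x)$.

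Next I would define the map $\phi\colon G \to G(x)$ by $\phi(g) = g*x$. By the very definition of the orbit, $\phi$ is surjective. The key step is to identify exactly when two elements have the same image: for $g,h\in G$,
\begin{equation}
g*x = h*x \iff (h^{-1}g)*x = x \iff h^{-1}g \in G_x \iff gG_x = hG_x .
\end{equation}
Thus $\phi(g)=\phi(h)$ if and only if $g$ and $h$ lie in the same left coset of $G_x$. Consequently $\phi$ descends to a well-defined map $\bar\phi\colon G/G_x \to G(x)$, $gG_x \mapsto g*x$, which the displayed equivalence shows is injective, and which is surjective because $\phi$ was; hence $\bar\phi$ is a bijection and $|G(x)| = |G/G_x|$.

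Finally, the number of left cosets of $G_x$ in $G$ is $|G|/|G_x|$ by Lemma~\ref{lem:Lagrange}, so $|G(x)| = |G|/|G_x|$, as claimed. There is no real obstacle here; the only point deserving care is the two-sided chain of equivalences above — in particular that $h^{-1}g\in G_x$ is equivalent to the equality of left cosets $gG_x = hG_x$ (and not right cosets), which is exactly why the bijection must be set up with left cosets. Everything else is routine verification, and the argument is valid for any action of a finite group on any set, so the geometric content of $G$ being dihedral or a product with contrast reversal plays no role.
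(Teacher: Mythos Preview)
Your argument is correct and is the standard coset-bijection proof of the Orbit--Stabilizer Theorem; the verification that $G_x$ is a subgroup, the equivalence $g*x=h*x\iff gG_x=hG_x$, and the appeal to Lemma~\ref{lem:Lagrange} are all sound. Note, however, that the paper does \emph{not} actually prove this lemma: it explicitly states Lemmas~\ref{lem:Lagrange} and~\ref{lem:os} without proof and only uses them as ingredients in the proof of the Orbit Counting Theorem. So there is nothing in the paper to compare your route against --- you have simply supplied the omitted standard argument.
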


\begin{theorem}[Orbit Counting Theorem] The number of orbits of $G$ acting on a set $X$ is the average of the number of fixed points $x$ by the action of the elements of $G$.
\begin{equation}\label{eq:numberoforbits}
 |X/G|=\frac{1}{|G|}\sum_{g\in G}|X^g|
\end{equation}
\end{theorem}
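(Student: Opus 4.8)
The plan is to evaluate the cardinality of the incidence set
$\mathcal{F} = \{(g,x)\in G\times X \mid g*x = x\}$ in two different ways and equate the outcomes. First I would sum over group elements: for a fixed $g$ the number of $x$ with $g*x=x$ is by definition $|X^g|$, so $|\mathcal{F}| = \sum_{g\in G}|X^g|$, which is (up to the factor $1/|G|$) the right-hand side of the claim.

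Next I would sum over points instead: for a fixed $x$ the number of $g$ with $g*x=x$ is exactly $|G_x| = |Stab(x)|$, so $|\mathcal{F}| = \sum_{x\in X}|G_x|$. At this point I invoke Lemma~\ref{lem:os} (the Orbit--Stabilizer Theorem), which gives $|G_x| = |G|/|G(x)|$, whence $|\mathcal{F}| = |G|\sum_{x\in X}\tfrac{1}{|G(x)|}$.

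The last step is to identify $\sum_{x\in X}\tfrac{1}{|G(x)|}$ with the number of orbits $|X/G|$. Since the orbits partition $X$, I would regroup the sum as $\sum_{O\in X/G}\sum_{x\in O}\tfrac{1}{|G(x)|}$; every point $x$ lying in a given orbit $O$ satisfies $|G(x)|=|O|$, so the inner sum equals $|O|\cdot\tfrac{1}{|O|}=1$, and summing $1$ over all orbits returns $|X/G|$. Combining the two counts yields $\sum_{g\in G}|X^g| = |G|\cdot|X/G|$, and dividing by $|G|$ gives the stated formula.

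The only genuine subtlety, and thus the main obstacle, is the Orbit--Stabilizer identity $|G(x)| = |G|/|G_x|$: it relies on the bijection $gG_x \mapsto g*x$ between left cosets of $G_x$ and points of $G(x)$, together with Lagrange's Theorem (Lemma~\ref{lem:Lagrange}). Since both lemmas are available, the rest of the argument is the elementary double count above. I would also note explicitly that finiteness of $G$ (and of $X$) is what makes all sums finite and the final division by $|G|$ legitimate.
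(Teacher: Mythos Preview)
Your argument is correct and is essentially the same as the paper's: both swap the order of summation in $\sum_{g}\sum_{x:gx=x}1$ (you phrase this as double-counting the incidence set $\mathcal{F}$), invoke the Orbit--Stabilizer Theorem to replace $|G_x|$ by $|G|/|G(x)|$, and then regroup the sum over orbits so that each orbit contributes $1$. The only cosmetic difference is that the paper cites Lagrange's Theorem at the regrouping step, whereas you (more accurately) locate its role inside the proof of the Orbit--Stabilizer identity.
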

\begin{proof}
Define the set of all fixed points by an
element of $G$ by $X^g=\{x\in X|gx=x\}$ and the set of all orbits by $X/G=
\{G(x) | x \in X \}$ and using lemmas~\ref{lem:Lagrange} and~\ref{lem:os} we have
\begin{equation*}
\begin{aligned}
\sum_{g\in G}|X^g|&=\sum_{g\in G}(\sum_{x:gx=x} 1)=\sum_{x\in X}(\sum_{g:gx=x} 1)\\
                &=\sum_{x\in X}|G_x|=\sum_{x\in X} \frac{|G|}{|G(x)|}   \quad \hookleftarrow\text{ using Lemma \ref{lem:os}}\\
                &=|G|\sum_{\omega \in X/G}\sum_{x \in \omega} (\frac{1}{|\omega|}) \quad \hookleftarrow\text{using Lemma \ref{lem:Lagrange}} \\                
                &=|G||X/G|, 
\end{aligned}
\end{equation*}
\end{proof}
where we indicate any orbit $G(x)$ by the dummy variable $\omega$, and used the fact if $x_1$ and $x_2$ are in the same orbit then $|G(x_1)|$ = $|G(x_2)|$ is the size of their common orbit $|\omega|$. So
\begin{equation*}
\sum_{x \in \omega} (\frac{1}{|\omega|}) = \frac{\overbrace{1+1+\cdots,1}^{|\omega|\text{-times}}}{|\omega|}=1.
\end{equation*}

\subsection*{Fourier Transform of \texorpdfstring{$G = C_2^{\mtimes 4}$}}
\label{sec:ifft}
The explicit form of the Fourier transform in Equation~\ref{eq:ftonC4} is given by the Equations~\ref{eq:jvtransform} below.
\begin{equation}\label{eq:jvtransform}
\hspace{-.5cm}
\resizebox{.97\hsize}{!}{$
\begin{aligned}
\varphi(\begin{smallmatrix} 0 & 0\\ 0 & 0 \end{smallmatrix})&=p(\begin{smallmatrix} 0 & 0\\ 0 & 0 \end{smallmatrix})+p(\begin{smallmatrix} 1 & 0\\ 0 & 0 \end{smallmatrix})+p(\begin{smallmatrix} 0 & 1\\ 0 & 0 \end{smallmatrix})+p(\begin{smallmatrix} 1 & 1\\ 0 & 0 \end{smallmatrix})+p(\begin{smallmatrix} 0 & 0\\ 1 & 0 \end{smallmatrix})+p(\begin{smallmatrix} 1 & 0\\ 1 & 0 \end{smallmatrix})+p(\begin{smallmatrix} 0 & 1\\ 1 & 0 \end{smallmatrix})+p(\begin{smallmatrix} 1 & 1\\ 1 & 0 \end{smallmatrix})+p(\begin{smallmatrix} 0 & 0\\ 0 & 1 \end{smallmatrix})+p(\begin{smallmatrix} 1 & 0\\ 0 & 1 \end{smallmatrix})+p(\begin{smallmatrix} 0 & 1\\ 0 & 1 \end{smallmatrix})+p(\begin{smallmatrix} 1 & 1\\ 0 & 1 \end{smallmatrix})+p(\begin{smallmatrix} 0 & 0\\ 1 & 1 \end{smallmatrix})+p(\begin{smallmatrix} 1 & 0\\ 1 & 1 \end{smallmatrix})+p(\begin{smallmatrix} 0 & 1\\ 1 & 1 \end{smallmatrix})+p(\begin{smallmatrix} 1 & 1\\ 1 & 1 \end{smallmatrix})\\
\varphi(\begin{smallmatrix} 1 & 0\\ 0 & 0 \end{smallmatrix})&=p(\begin{smallmatrix} 0 & 0\\ 0 & 0 \end{smallmatrix})-p(\begin{smallmatrix} 1 & 0\\ 0 & 0 \end{smallmatrix})+p(\begin{smallmatrix} 0 & 1\\ 0 & 0 \end{smallmatrix})-p(\begin{smallmatrix} 1 & 1\\ 0 & 0 \end{smallmatrix})+p(\begin{smallmatrix} 0 & 0\\ 1 & 0 \end{smallmatrix})-p(\begin{smallmatrix} 1 & 0\\ 1 & 0 \end{smallmatrix})+p(\begin{smallmatrix} 0 & 1\\ 1 & 0 \end{smallmatrix})-p(\begin{smallmatrix} 1 & 1\\ 1 & 0 \end{smallmatrix})+p(\begin{smallmatrix} 0 & 0\\ 0 & 1 \end{smallmatrix})-p(\begin{smallmatrix} 1 & 0\\ 0 & 1 \end{smallmatrix})+p(\begin{smallmatrix} 0 & 1\\ 0 & 1 \end{smallmatrix})-p(\begin{smallmatrix} 1 & 1\\ 0 & 1 \end{smallmatrix})+p(\begin{smallmatrix} 0 & 0\\ 1 & 1 \end{smallmatrix})-p(\begin{smallmatrix} 1 & 0\\ 1 & 1 \end{smallmatrix})+p(\begin{smallmatrix} 0 & 1\\ 1 & 1 \end{smallmatrix})-p(\begin{smallmatrix} 1 & 1\\ 1 & 1 \end{smallmatrix})\\
\varphi(\begin{smallmatrix} 0 & 1\\ 0 & 0 \end{smallmatrix})&=p(\begin{smallmatrix} 0 & 0\\ 0 & 0 \end{smallmatrix})+p(\begin{smallmatrix} 1 & 0\\ 0 & 0 \end{smallmatrix})-p(\begin{smallmatrix} 0 & 1\\ 0 & 0 \end{smallmatrix})-p(\begin{smallmatrix} 1 & 1\\ 0 & 0 \end{smallmatrix})+p(\begin{smallmatrix} 0 & 0\\ 1 & 0 \end{smallmatrix})+p(\begin{smallmatrix} 1 & 0\\ 1 & 0 \end{smallmatrix})-p(\begin{smallmatrix} 0 & 1\\ 1 & 0 \end{smallmatrix})-p(\begin{smallmatrix} 1 & 1\\ 1 & 0 \end{smallmatrix})+p(\begin{smallmatrix} 0 & 0\\ 0 & 1 \end{smallmatrix})+p(\begin{smallmatrix} 1 & 0\\ 0 & 1 \end{smallmatrix})-p(\begin{smallmatrix} 0 & 1\\ 0 & 1 \end{smallmatrix})-p(\begin{smallmatrix} 1 & 1\\ 0 & 1 \end{smallmatrix})+p(\begin{smallmatrix} 0 & 0\\ 1 & 1 \end{smallmatrix})+p(\begin{smallmatrix} 1 & 0\\ 1 & 1 \end{smallmatrix})-p(\begin{smallmatrix} 0 & 1\\ 1 & 1 \end{smallmatrix})-p(\begin{smallmatrix} 1 & 1\\ 1 & 1 \end{smallmatrix})\\
\varphi(\begin{smallmatrix} 1 & 1\\ 0 & 0 \end{smallmatrix})&=p(\begin{smallmatrix} 0 & 0\\ 0 & 0 \end{smallmatrix})-p(\begin{smallmatrix} 1 & 0\\ 0 & 0 \end{smallmatrix})-p(\begin{smallmatrix} 0 & 1\\ 0 & 0 \end{smallmatrix})+p(\begin{smallmatrix} 1 & 1\\ 0 & 0 \end{smallmatrix})+p(\begin{smallmatrix} 0 & 0\\ 1 & 0 \end{smallmatrix})-p(\begin{smallmatrix} 1 & 0\\ 1 & 0 \end{smallmatrix})-p(\begin{smallmatrix} 0 & 1\\ 1 & 0 \end{smallmatrix})+p(\begin{smallmatrix} 1 & 1\\ 1 & 0 \end{smallmatrix})+p(\begin{smallmatrix} 0 & 0\\ 0 & 1 \end{smallmatrix})-p(\begin{smallmatrix} 1 & 0\\ 0 & 1 \end{smallmatrix})-p(\begin{smallmatrix} 0 & 1\\ 0 & 1 \end{smallmatrix})+p(\begin{smallmatrix} 1 & 1\\ 0 & 1 \end{smallmatrix})+p(\begin{smallmatrix} 0 & 0\\ 1 & 1 \end{smallmatrix})-p(\begin{smallmatrix} 1 & 0\\ 1 & 1 \end{smallmatrix})-p(\begin{smallmatrix} 0 & 1\\ 1 & 1 \end{smallmatrix})+p(\begin{smallmatrix} 1 & 1\\ 1 & 1 \end{smallmatrix})\\
\varphi(\begin{smallmatrix} 0 & 0\\ 1 & 0 \end{smallmatrix})&=p(\begin{smallmatrix} 0 & 0\\ 0 & 0 \end{smallmatrix})+p(\begin{smallmatrix} 1 & 0\\ 0 & 0 \end{smallmatrix})+p(\begin{smallmatrix} 0 & 1\\ 0 & 0 \end{smallmatrix})+p(\begin{smallmatrix} 1 & 1\\ 0 & 0 \end{smallmatrix})-p(\begin{smallmatrix} 0 & 0\\ 1 & 0 \end{smallmatrix})-p(\begin{smallmatrix} 1 & 0\\ 1 & 0 \end{smallmatrix})-p(\begin{smallmatrix} 0 & 1\\ 1 & 0 \end{smallmatrix})-p(\begin{smallmatrix} 1 & 1\\ 1 & 0 \end{smallmatrix})+p(\begin{smallmatrix} 0 & 0\\ 0 & 1 \end{smallmatrix})+p(\begin{smallmatrix} 1 & 0\\ 0 & 1 \end{smallmatrix})+p(\begin{smallmatrix} 0 & 1\\ 0 & 1 \end{smallmatrix})+p(\begin{smallmatrix} 1 & 1\\ 0 & 1 \end{smallmatrix})-p(\begin{smallmatrix} 0 & 0\\ 1 & 1 \end{smallmatrix})-p(\begin{smallmatrix} 1 & 0\\ 1 & 1 \end{smallmatrix})-p(\begin{smallmatrix} 0 & 1\\ 1 & 1 \end{smallmatrix})-p(\begin{smallmatrix} 1 & 1\\ 1 & 1 \end{smallmatrix})\\
\varphi(\begin{smallmatrix} 1 & 0\\ 1 & 0 \end{smallmatrix})&=p(\begin{smallmatrix} 0 & 0\\ 0 & 0 \end{smallmatrix})-p(\begin{smallmatrix} 1 & 0\\ 0 & 0 \end{smallmatrix})+p(\begin{smallmatrix} 0 & 1\\ 0 & 0 \end{smallmatrix})-p(\begin{smallmatrix} 1 & 1\\ 0 & 0 \end{smallmatrix})-p(\begin{smallmatrix} 0 & 0\\ 1 & 0 \end{smallmatrix})+p(\begin{smallmatrix} 1 & 0\\ 1 & 0 \end{smallmatrix})-p(\begin{smallmatrix} 0 & 1\\ 1 & 0 \end{smallmatrix})+p(\begin{smallmatrix} 1 & 1\\ 1 & 0 \end{smallmatrix})+p(\begin{smallmatrix} 0 & 0\\ 0 & 1 \end{smallmatrix})-p(\begin{smallmatrix} 1 & 0\\ 0 & 1 \end{smallmatrix})+p(\begin{smallmatrix} 0 & 1\\ 0 & 1 \end{smallmatrix})-p(\begin{smallmatrix} 1 & 1\\ 0 & 1 \end{smallmatrix})-p(\begin{smallmatrix} 0 & 0\\ 1 & 1 \end{smallmatrix})+p(\begin{smallmatrix} 1 & 0\\ 1 & 1 \end{smallmatrix})-p(\begin{smallmatrix} 0 & 1\\ 1 & 1 \end{smallmatrix})+p(\begin{smallmatrix} 1 & 1\\ 1 & 1 \end{smallmatrix})\\
\varphi(\begin{smallmatrix} 0 & 1\\ 1 & 0 \end{smallmatrix})&=p(\begin{smallmatrix} 0 & 0\\ 0 & 0 \end{smallmatrix})+p(\begin{smallmatrix} 1 & 0\\ 0 & 0 \end{smallmatrix})-p(\begin{smallmatrix} 0 & 1\\ 0 & 0 \end{smallmatrix})-p(\begin{smallmatrix} 1 & 1\\ 0 & 0 \end{smallmatrix})-p(\begin{smallmatrix} 0 & 0\\ 1 & 0 \end{smallmatrix})-p(\begin{smallmatrix} 1 & 0\\ 1 & 0 \end{smallmatrix})+p(\begin{smallmatrix} 0 & 1\\ 1 & 0 \end{smallmatrix})+p(\begin{smallmatrix} 1 & 1\\ 1 & 0 \end{smallmatrix})+p(\begin{smallmatrix} 0 & 0\\ 0 & 1 \end{smallmatrix})+p(\begin{smallmatrix} 1 & 0\\ 0 & 1 \end{smallmatrix})-p(\begin{smallmatrix} 0 & 1\\ 0 & 1 \end{smallmatrix})-p(\begin{smallmatrix} 1 & 1\\ 0 & 1 \end{smallmatrix})-p(\begin{smallmatrix} 0 & 0\\ 1 & 1 \end{smallmatrix})-p(\begin{smallmatrix} 1 & 0\\ 1 & 1 \end{smallmatrix})+p(\begin{smallmatrix} 0 & 1\\ 1 & 1 \end{smallmatrix})+p(\begin{smallmatrix} 1 & 1\\ 1 & 1 \end{smallmatrix})\\
\varphi(\begin{smallmatrix} 1 & 1\\ 1 & 0 \end{smallmatrix})&=p(\begin{smallmatrix} 0 & 0\\ 0 & 0 \end{smallmatrix})-p(\begin{smallmatrix} 1 & 0\\ 0 & 0 \end{smallmatrix})-p(\begin{smallmatrix} 0 & 1\\ 0 & 0 \end{smallmatrix})+p(\begin{smallmatrix} 1 & 1\\ 0 & 0 \end{smallmatrix})-p(\begin{smallmatrix} 0 & 0\\ 1 & 0 \end{smallmatrix})+p(\begin{smallmatrix} 1 & 0\\ 1 & 0 \end{smallmatrix})+p(\begin{smallmatrix} 0 & 1\\ 1 & 0 \end{smallmatrix})-p(\begin{smallmatrix} 1 & 1\\ 1 & 0 \end{smallmatrix})+p(\begin{smallmatrix} 0 & 0\\ 0 & 1 \end{smallmatrix})-p(\begin{smallmatrix} 1 & 0\\ 0 & 1 \end{smallmatrix})-p(\begin{smallmatrix} 0 & 1\\ 0 & 1 \end{smallmatrix})+p(\begin{smallmatrix} 1 & 1\\ 0 & 1 \end{smallmatrix})-p(\begin{smallmatrix} 0 & 0\\ 1 & 1 \end{smallmatrix})+p(\begin{smallmatrix} 1 & 0\\ 1 & 1 \end{smallmatrix})+p(\begin{smallmatrix} 0 & 1\\ 1 & 1 \end{smallmatrix})-p(\begin{smallmatrix} 1 & 1\\ 1 & 1 \end{smallmatrix})\\
\varphi(\begin{smallmatrix} 0 & 0\\ 0 & 1 \end{smallmatrix})&=p(\begin{smallmatrix} 0 & 0\\ 0 & 0 \end{smallmatrix})+p(\begin{smallmatrix} 1 & 0\\ 0 & 0 \end{smallmatrix})+p(\begin{smallmatrix} 0 & 1\\ 0 & 0 \end{smallmatrix})+p(\begin{smallmatrix} 1 & 1\\ 0 & 0 \end{smallmatrix})+p(\begin{smallmatrix} 0 & 0\\ 1 & 0 \end{smallmatrix})+p(\begin{smallmatrix} 1 & 0\\ 1 & 0 \end{smallmatrix})+p(\begin{smallmatrix} 0 & 1\\ 1 & 0 \end{smallmatrix})+p(\begin{smallmatrix} 1 & 1\\ 1 & 0 \end{smallmatrix})-p(\begin{smallmatrix} 0 & 0\\ 0 & 1 \end{smallmatrix})-p(\begin{smallmatrix} 1 & 0\\ 0 & 1 \end{smallmatrix})-p(\begin{smallmatrix} 0 & 1\\ 0 & 1 \end{smallmatrix})-p(\begin{smallmatrix} 1 & 1\\ 0 & 1 \end{smallmatrix})-p(\begin{smallmatrix} 0 & 0\\ 1 & 1 \end{smallmatrix})-p(\begin{smallmatrix} 1 & 0\\ 1 & 1 \end{smallmatrix})-p(\begin{smallmatrix} 0 & 1\\ 1 & 1 \end{smallmatrix})-p(\begin{smallmatrix} 1 & 1\\ 1 & 1 \end{smallmatrix})\\
\varphi(\begin{smallmatrix} 1 & 0\\ 0 & 1 \end{smallmatrix})&=p(\begin{smallmatrix} 0 & 0\\ 0 & 0 \end{smallmatrix})-p(\begin{smallmatrix} 1 & 0\\ 0 & 0 \end{smallmatrix})+p(\begin{smallmatrix} 0 & 1\\ 0 & 0 \end{smallmatrix})-p(\begin{smallmatrix} 1 & 1\\ 0 & 0 \end{smallmatrix})+p(\begin{smallmatrix} 0 & 0\\ 1 & 0 \end{smallmatrix})-p(\begin{smallmatrix} 1 & 0\\ 1 & 0 \end{smallmatrix})+p(\begin{smallmatrix} 0 & 1\\ 1 & 0 \end{smallmatrix})-p(\begin{smallmatrix} 1 & 1\\ 1 & 0 \end{smallmatrix})-p(\begin{smallmatrix} 0 & 0\\ 0 & 1 \end{smallmatrix})+p(\begin{smallmatrix} 1 & 0\\ 0 & 1 \end{smallmatrix})-p(\begin{smallmatrix} 0 & 1\\ 0 & 1 \end{smallmatrix})+p(\begin{smallmatrix} 1 & 1\\ 0 & 1 \end{smallmatrix})-p(\begin{smallmatrix} 0 & 0\\ 1 & 1 \end{smallmatrix})+p(\begin{smallmatrix} 1 & 0\\ 1 & 1 \end{smallmatrix})-p(\begin{smallmatrix} 0 & 1\\ 1 & 1 \end{smallmatrix})+p(\begin{smallmatrix} 1 & 1\\ 1 & 1 \end{smallmatrix})\\
\varphi(\begin{smallmatrix} 0 & 1\\ 0 & 1 \end{smallmatrix})&=p(\begin{smallmatrix} 0 & 0\\ 0 & 0 \end{smallmatrix})+p(\begin{smallmatrix} 1 & 0\\ 0 & 0 \end{smallmatrix})-p(\begin{smallmatrix} 0 & 1\\ 0 & 0 \end{smallmatrix})-p(\begin{smallmatrix} 1 & 1\\ 0 & 0 \end{smallmatrix})+p(\begin{smallmatrix} 0 & 0\\ 1 & 0 \end{smallmatrix})+p(\begin{smallmatrix} 1 & 0\\ 1 & 0 \end{smallmatrix})-p(\begin{smallmatrix} 0 & 1\\ 1 & 0 \end{smallmatrix})-p(\begin{smallmatrix} 1 & 1\\ 1 & 0 \end{smallmatrix})-p(\begin{smallmatrix} 0 & 0\\ 0 & 1 \end{smallmatrix})-p(\begin{smallmatrix} 1 & 0\\ 0 & 1 \end{smallmatrix})+p(\begin{smallmatrix} 0 & 1\\ 0 & 1 \end{smallmatrix})+p(\begin{smallmatrix} 1 & 1\\ 0 & 1 \end{smallmatrix})-p(\begin{smallmatrix} 0 & 0\\ 1 & 1 \end{smallmatrix})-p(\begin{smallmatrix} 1 & 0\\ 1 & 1 \end{smallmatrix})+p(\begin{smallmatrix} 0 & 1\\ 1 & 1 \end{smallmatrix})+p(\begin{smallmatrix} 1 & 1\\ 1 & 1 \end{smallmatrix})\\
\varphi(\begin{smallmatrix} 1 & 1\\ 0 & 1 \end{smallmatrix})&=p(\begin{smallmatrix} 0 & 0\\ 0 & 0 \end{smallmatrix})-p(\begin{smallmatrix} 1 & 0\\ 0 & 0 \end{smallmatrix})-p(\begin{smallmatrix} 0 & 1\\ 0 & 0 \end{smallmatrix})+p(\begin{smallmatrix} 1 & 1\\ 0 & 0 \end{smallmatrix})+p(\begin{smallmatrix} 0 & 0\\ 1 & 0 \end{smallmatrix})-p(\begin{smallmatrix} 1 & 0\\ 1 & 0 \end{smallmatrix})-p(\begin{smallmatrix} 0 & 1\\ 1 & 0 \end{smallmatrix})+p(\begin{smallmatrix} 1 & 1\\ 1 & 0 \end{smallmatrix})-p(\begin{smallmatrix} 0 & 0\\ 0 & 1 \end{smallmatrix})+p(\begin{smallmatrix} 1 & 0\\ 0 & 1 \end{smallmatrix})+p(\begin{smallmatrix} 0 & 1\\ 0 & 1 \end{smallmatrix})-p(\begin{smallmatrix} 1 & 1\\ 0 & 1 \end{smallmatrix})-p(\begin{smallmatrix} 0 & 0\\ 1 & 1 \end{smallmatrix})+p(\begin{smallmatrix} 1 & 0\\ 1 & 1 \end{smallmatrix})+p(\begin{smallmatrix} 0 & 1\\ 1 & 1 \end{smallmatrix})-p(\begin{smallmatrix} 1 & 1\\ 1 & 1 \end{smallmatrix})\\
\varphi(\begin{smallmatrix} 0 & 0\\ 1 & 1 \end{smallmatrix})&=p(\begin{smallmatrix} 0 & 0\\ 0 & 0 \end{smallmatrix})+p(\begin{smallmatrix} 1 & 0\\ 0 & 0 \end{smallmatrix})+p(\begin{smallmatrix} 0 & 1\\ 0 & 0 \end{smallmatrix})+p(\begin{smallmatrix} 1 & 1\\ 0 & 0 \end{smallmatrix})-p(\begin{smallmatrix} 0 & 0\\ 1 & 0 \end{smallmatrix})-p(\begin{smallmatrix} 1 & 0\\ 1 & 0 \end{smallmatrix})-p(\begin{smallmatrix} 0 & 1\\ 1 & 0 \end{smallmatrix})-p(\begin{smallmatrix} 1 & 1\\ 1 & 0 \end{smallmatrix})-p(\begin{smallmatrix} 0 & 0\\ 0 & 1 \end{smallmatrix})-p(\begin{smallmatrix} 1 & 0\\ 0 & 1 \end{smallmatrix})-p(\begin{smallmatrix} 0 & 1\\ 0 & 1 \end{smallmatrix})-p(\begin{smallmatrix} 1 & 1\\ 0 & 1 \end{smallmatrix})+p(\begin{smallmatrix} 0 & 0\\ 1 & 1 \end{smallmatrix})+p(\begin{smallmatrix} 1 & 0\\ 1 & 1 \end{smallmatrix})+p(\begin{smallmatrix} 0 & 1\\ 1 & 1 \end{smallmatrix})+p(\begin{smallmatrix} 1 & 1\\ 1 & 1 \end{smallmatrix})\\
\varphi(\begin{smallmatrix} 1 & 0\\ 1 & 1 \end{smallmatrix})&=p(\begin{smallmatrix} 0 & 0\\ 0 & 0 \end{smallmatrix})-p(\begin{smallmatrix} 1 & 0\\ 0 & 0 \end{smallmatrix})+p(\begin{smallmatrix} 0 & 1\\ 0 & 0 \end{smallmatrix})-p(\begin{smallmatrix} 1 & 1\\ 0 & 0 \end{smallmatrix})-p(\begin{smallmatrix} 0 & 0\\ 1 & 0 \end{smallmatrix})+p(\begin{smallmatrix} 1 & 0\\ 1 & 0 \end{smallmatrix})-p(\begin{smallmatrix} 0 & 1\\ 1 & 0 \end{smallmatrix})+p(\begin{smallmatrix} 1 & 1\\ 1 & 0 \end{smallmatrix})-p(\begin{smallmatrix} 0 & 0\\ 0 & 1 \end{smallmatrix})+p(\begin{smallmatrix} 1 & 0\\ 0 & 1 \end{smallmatrix})-p(\begin{smallmatrix} 0 & 1\\ 0 & 1 \end{smallmatrix})+p(\begin{smallmatrix} 1 & 1\\ 0 & 1 \end{smallmatrix})+p(\begin{smallmatrix} 0 & 0\\ 1 & 1 \end{smallmatrix})-p(\begin{smallmatrix} 1 & 0\\ 1 & 1 \end{smallmatrix})+p(\begin{smallmatrix} 0 & 1\\ 1 & 1 \end{smallmatrix})-p(\begin{smallmatrix} 1 & 1\\ 1 & 1 \end{smallmatrix})\\
\varphi(\begin{smallmatrix} 0 & 1\\ 1 & 1 \end{smallmatrix})&=p(\begin{smallmatrix} 0 & 0\\ 0 & 0 \end{smallmatrix})+p(\begin{smallmatrix} 1 & 0\\ 0 & 0 \end{smallmatrix})-p(\begin{smallmatrix} 0 & 1\\ 0 & 0 \end{smallmatrix})-p(\begin{smallmatrix} 1 & 1\\ 0 & 0 \end{smallmatrix})-p(\begin{smallmatrix} 0 & 0\\ 1 & 0 \end{smallmatrix})-p(\begin{smallmatrix} 1 & 0\\ 1 & 0 \end{smallmatrix})+p(\begin{smallmatrix} 0 & 1\\ 1 & 0 \end{smallmatrix})+p(\begin{smallmatrix} 1 & 1\\ 1 & 0 \end{smallmatrix})-p(\begin{smallmatrix} 0 & 0\\ 0 & 1 \end{smallmatrix})-p(\begin{smallmatrix} 1 & 0\\ 0 & 1 \end{smallmatrix})+p(\begin{smallmatrix} 0 & 1\\ 0 & 1 \end{smallmatrix})+p(\begin{smallmatrix} 1 & 1\\ 0 & 1 \end{smallmatrix})+p(\begin{smallmatrix} 0 & 0\\ 1 & 1 \end{smallmatrix})+p(\begin{smallmatrix} 1 & 0\\ 1 & 1 \end{smallmatrix})-p(\begin{smallmatrix} 0 & 1\\ 1 & 1 \end{smallmatrix})-p(\begin{smallmatrix} 1 & 1\\ 1 & 1 \end{smallmatrix})\\
\varphi(\begin{smallmatrix} 1 & 1\\ 1 & 1 \end{smallmatrix})&=p(\begin{smallmatrix} 0 & 0\\ 0 & 0 \end{smallmatrix})-p(\begin{smallmatrix} 1 & 0\\ 0 & 0 \end{smallmatrix})-p(\begin{smallmatrix} 0 & 1\\ 0 & 0 \end{smallmatrix})+p(\begin{smallmatrix} 1 & 1\\ 0 & 0 \end{smallmatrix})-p(\begin{smallmatrix} 0 & 0\\ 1 & 0 \end{smallmatrix})+p(\begin{smallmatrix} 1 & 0\\ 1 & 0 \end{smallmatrix})+p(\begin{smallmatrix} 0 & 1\\ 1 & 0 \end{smallmatrix})-p(\begin{smallmatrix} 1 & 1\\ 1 & 0 \end{smallmatrix})-p(\begin{smallmatrix} 0 & 0\\ 0 & 1 \end{smallmatrix})+p(\begin{smallmatrix} 1 & 0\\ 0 & 1 \end{smallmatrix})+p(\begin{smallmatrix} 0 & 1\\ 0 & 1 \end{smallmatrix})-p(\begin{smallmatrix} 1 & 1\\ 0 & 1 \end{smallmatrix})+p(\begin{smallmatrix} 0 & 0\\ 1 & 1 \end{smallmatrix})-p(\begin{smallmatrix} 1 & 0\\ 1 & 1 \end{smallmatrix})-p(\begin{smallmatrix} 0 & 1\\ 1 & 1 \end{smallmatrix})+p(\begin{smallmatrix} 1 & 1\\ 1 & 1 \end{smallmatrix})\\
\end{aligned}$}
\end{equation}

Some of the coordinates in the Equations~\ref{eq:jvtransform} are identical for transforms of functions representing the probability of $2\mtimes2$ pixel patches, e.g. $\{\varphi(\begin{smallmatrix} 0 & 0\\ 1 & 1 \end{smallmatrix}) = \varphi(\begin{smallmatrix} 1 & 1\\ 0 & 0 \end{smallmatrix}),\varphi(\begin{smallmatrix} 1 & 0\\ 0 & 0 \end{smallmatrix}) = \varphi(\begin{smallmatrix} 0 & 1\\ 0 & 0 \end{smallmatrix}) \}$. Similarly, it is obvious that $\varphi(\begin{smallmatrix} 0 & 0\\ 0 & 0 \end{smallmatrix}) = 1$, but in general it is a free coordinate and we will keep it as such because it impacts upon the results of the symmetry analysis.
The inverse of the Fourier transform in Equations~\ref{eq:jvtransform} is given in below by Equation~\ref{eq:jvitransform}.
\begin{equation}\label{eq:jvitransform}
\hspace{-.56cm}
\resizebox{.96\hsize}{!}{$
\begin{aligned}
16p(\begin{smallmatrix} 0 & 0\\ 0 & 0 \end{smallmatrix})&=\varphi(\begin{smallmatrix} 0 & 0\\ 0 & 0 \end{smallmatrix})+\varphi(\begin{smallmatrix} 1 & 0\\ 0 & 0 \end{smallmatrix})+\varphi(\begin{smallmatrix} 0 & 1\\ 0 & 0 \end{smallmatrix})+\varphi(\begin{smallmatrix} 1 & 1\\ 0 & 0 \end{smallmatrix})+\varphi(\begin{smallmatrix} 0 & 0\\ 1 & 0 \end{smallmatrix})+\varphi(\begin{smallmatrix} 1 & 0\\ 1 & 0 \end{smallmatrix})+\varphi(\begin{smallmatrix} 0 & 1\\ 1 & 0 \end{smallmatrix})+\varphi(\begin{smallmatrix} 1 & 1\\ 1 & 0 \end{smallmatrix})+\varphi(\begin{smallmatrix} 0 & 0\\ 0 & 1 \end{smallmatrix})+\varphi(\begin{smallmatrix} 1 & 0\\ 0 & 1 \end{smallmatrix})+\varphi(\begin{smallmatrix} 0 & 1\\ 0 & 1 \end{smallmatrix})+\varphi(\begin{smallmatrix} 1 & 1\\ 0 & 1 \end{smallmatrix})+\varphi(\begin{smallmatrix} 0 & 0\\ 1 & 1 \end{smallmatrix})+\varphi(\begin{smallmatrix} 1 & 0\\ 1 & 1 \end{smallmatrix})+\varphi(\begin{smallmatrix} 0 & 1\\ 1 & 1 \end{smallmatrix})+\varphi(\begin{smallmatrix} 1 & 1\\ 1 & 1 \end{smallmatrix})\\
16p(\begin{smallmatrix} 1 & 0\\ 0 & 0 \end{smallmatrix})&=\varphi(\begin{smallmatrix} 0 & 0\\ 0 & 0 \end{smallmatrix})-\varphi(\begin{smallmatrix} 1 & 0\\ 0 & 0 \end{smallmatrix})+\varphi(\begin{smallmatrix} 0 & 1\\ 0 & 0 \end{smallmatrix})-\varphi(\begin{smallmatrix} 1 & 1\\ 0 & 0 \end{smallmatrix})+\varphi(\begin{smallmatrix} 0 & 0\\ 1 & 0 \end{smallmatrix})-\varphi(\begin{smallmatrix} 1 & 0\\ 1 & 0 \end{smallmatrix})+\varphi(\begin{smallmatrix} 0 & 1\\ 1 & 0 \end{smallmatrix})-\varphi(\begin{smallmatrix} 1 & 1\\ 1 & 0 \end{smallmatrix})+\varphi(\begin{smallmatrix} 0 & 0\\ 0 & 1 \end{smallmatrix})-\varphi(\begin{smallmatrix} 1 & 0\\ 0 & 1 \end{smallmatrix})+\varphi(\begin{smallmatrix} 0 & 1\\ 0 & 1 \end{smallmatrix})-\varphi(\begin{smallmatrix} 1 & 1\\ 0 & 1 \end{smallmatrix})+\varphi(\begin{smallmatrix} 0 & 0\\ 1 & 1 \end{smallmatrix})-\varphi(\begin{smallmatrix} 1 & 0\\ 1 & 1 \end{smallmatrix})+\varphi(\begin{smallmatrix} 0 & 1\\ 1 & 1 \end{smallmatrix})-\varphi(\begin{smallmatrix} 1 & 1\\ 1 & 1 \end{smallmatrix})\\
16p(\begin{smallmatrix} 0 & 1\\ 0 & 0 \end{smallmatrix})&=\varphi(\begin{smallmatrix} 0 & 0\\ 0 & 0 \end{smallmatrix})+\varphi(\begin{smallmatrix} 1 & 0\\ 0 & 0 \end{smallmatrix})-\varphi(\begin{smallmatrix} 0 & 1\\ 0 & 0 \end{smallmatrix})-\varphi(\begin{smallmatrix} 1 & 1\\ 0 & 0 \end{smallmatrix})+\varphi(\begin{smallmatrix} 0 & 0\\ 1 & 0 \end{smallmatrix})+\varphi(\begin{smallmatrix} 1 & 0\\ 1 & 0 \end{smallmatrix})-\varphi(\begin{smallmatrix} 0 & 1\\ 1 & 0 \end{smallmatrix})-\varphi(\begin{smallmatrix} 1 & 1\\ 1 & 0 \end{smallmatrix})+\varphi(\begin{smallmatrix} 0 & 0\\ 0 & 1 \end{smallmatrix})+\varphi(\begin{smallmatrix} 1 & 0\\ 0 & 1 \end{smallmatrix})-\varphi(\begin{smallmatrix} 0 & 1\\ 0 & 1 \end{smallmatrix})-\varphi(\begin{smallmatrix} 1 & 1\\ 0 & 1 \end{smallmatrix})+\varphi(\begin{smallmatrix} 0 & 0\\ 1 & 1 \end{smallmatrix})+\varphi(\begin{smallmatrix} 1 & 0\\ 1 & 1 \end{smallmatrix})-\varphi(\begin{smallmatrix} 0 & 1\\ 1 & 1 \end{smallmatrix})-\varphi(\begin{smallmatrix} 1 & 1\\ 1 & 1 \end{smallmatrix})\\
16p(\begin{smallmatrix} 1 & 1\\ 0 & 0 \end{smallmatrix})&=\varphi(\begin{smallmatrix} 0 & 0\\ 0 & 0 \end{smallmatrix})-\varphi(\begin{smallmatrix} 1 & 0\\ 0 & 0 \end{smallmatrix})-\varphi(\begin{smallmatrix} 0 & 1\\ 0 & 0 \end{smallmatrix})+\varphi(\begin{smallmatrix} 1 & 1\\ 0 & 0 \end{smallmatrix})+\varphi(\begin{smallmatrix} 0 & 0\\ 1 & 0 \end{smallmatrix})-\varphi(\begin{smallmatrix} 1 & 0\\ 1 & 0 \end{smallmatrix})-\varphi(\begin{smallmatrix} 0 & 1\\ 1 & 0 \end{smallmatrix})+\varphi(\begin{smallmatrix} 1 & 1\\ 1 & 0 \end{smallmatrix})+\varphi(\begin{smallmatrix} 0 & 0\\ 0 & 1 \end{smallmatrix})-\varphi(\begin{smallmatrix} 1 & 0\\ 0 & 1 \end{smallmatrix})-\varphi(\begin{smallmatrix} 0 & 1\\ 0 & 1 \end{smallmatrix})+\varphi(\begin{smallmatrix} 1 & 1\\ 0 & 1 \end{smallmatrix})+\varphi(\begin{smallmatrix} 0 & 0\\ 1 & 1 \end{smallmatrix})-\varphi(\begin{smallmatrix} 1 & 0\\ 1 & 1 \end{smallmatrix})-\varphi(\begin{smallmatrix} 0 & 1\\ 1 & 1 \end{smallmatrix})+\varphi(\begin{smallmatrix} 1 & 1\\ 1 & 1 \end{smallmatrix})\\
16p(\begin{smallmatrix} 0 & 0\\ 1 & 0 \end{smallmatrix})&=\varphi(\begin{smallmatrix} 0 & 0\\ 0 & 0 \end{smallmatrix})+\varphi(\begin{smallmatrix} 1 & 0\\ 0 & 0 \end{smallmatrix})+\varphi(\begin{smallmatrix} 0 & 1\\ 0 & 0 \end{smallmatrix})+\varphi(\begin{smallmatrix} 1 & 1\\ 0 & 0 \end{smallmatrix})-\varphi(\begin{smallmatrix} 0 & 0\\ 1 & 0 \end{smallmatrix})-\varphi(\begin{smallmatrix} 1 & 0\\ 1 & 0 \end{smallmatrix})-\varphi(\begin{smallmatrix} 0 & 1\\ 1 & 0 \end{smallmatrix})-\varphi(\begin{smallmatrix} 1 & 1\\ 1 & 0 \end{smallmatrix})+\varphi(\begin{smallmatrix} 0 & 0\\ 0 & 1 \end{smallmatrix})+\varphi(\begin{smallmatrix} 1 & 0\\ 0 & 1 \end{smallmatrix})+\varphi(\begin{smallmatrix} 0 & 1\\ 0 & 1 \end{smallmatrix})+\varphi(\begin{smallmatrix} 1 & 1\\ 0 & 1 \end{smallmatrix})-\varphi(\begin{smallmatrix} 0 & 0\\ 1 & 1 \end{smallmatrix})-\varphi(\begin{smallmatrix} 1 & 0\\ 1 & 1 \end{smallmatrix})-\varphi(\begin{smallmatrix} 0 & 1\\ 1 & 1 \end{smallmatrix})-\varphi(\begin{smallmatrix} 1 & 1\\ 1 & 1 \end{smallmatrix})\\
16p(\begin{smallmatrix} 1 & 0\\ 1 & 0 \end{smallmatrix})&=\varphi(\begin{smallmatrix} 0 & 0\\ 0 & 0 \end{smallmatrix})-\varphi(\begin{smallmatrix} 1 & 0\\ 0 & 0 \end{smallmatrix})+\varphi(\begin{smallmatrix} 0 & 1\\ 0 & 0 \end{smallmatrix})-\varphi(\begin{smallmatrix} 1 & 1\\ 0 & 0 \end{smallmatrix})-\varphi(\begin{smallmatrix} 0 & 0\\ 1 & 0 \end{smallmatrix})+\varphi(\begin{smallmatrix} 1 & 0\\ 1 & 0 \end{smallmatrix})-\varphi(\begin{smallmatrix} 0 & 1\\ 1 & 0 \end{smallmatrix})+\varphi(\begin{smallmatrix} 1 & 1\\ 1 & 0 \end{smallmatrix})+\varphi(\begin{smallmatrix} 0 & 0\\ 0 & 1 \end{smallmatrix})-\varphi(\begin{smallmatrix} 1 & 0\\ 0 & 1 \end{smallmatrix})+\varphi(\begin{smallmatrix} 0 & 1\\ 0 & 1 \end{smallmatrix})-\varphi(\begin{smallmatrix} 1 & 1\\ 0 & 1 \end{smallmatrix})-\varphi(\begin{smallmatrix} 0 & 0\\ 1 & 1 \end{smallmatrix})+\varphi(\begin{smallmatrix} 1 & 0\\ 1 & 1 \end{smallmatrix})-\varphi(\begin{smallmatrix} 0 & 1\\ 1 & 1 \end{smallmatrix})+\varphi(\begin{smallmatrix} 1 & 1\\ 1 & 1 \end{smallmatrix})\\
16p(\begin{smallmatrix} 0 & 1\\ 1 & 0 \end{smallmatrix})&=\varphi(\begin{smallmatrix} 0 & 0\\ 0 & 0 \end{smallmatrix})+\varphi(\begin{smallmatrix} 1 & 0\\ 0 & 0 \end{smallmatrix})-\varphi(\begin{smallmatrix} 0 & 1\\ 0 & 0 \end{smallmatrix})-\varphi(\begin{smallmatrix} 1 & 1\\ 0 & 0 \end{smallmatrix})-\varphi(\begin{smallmatrix} 0 & 0\\ 1 & 0 \end{smallmatrix})-\varphi(\begin{smallmatrix} 1 & 0\\ 1 & 0 \end{smallmatrix})+\varphi(\begin{smallmatrix} 0 & 1\\ 1 & 0 \end{smallmatrix})+\varphi(\begin{smallmatrix} 1 & 1\\ 1 & 0 \end{smallmatrix})+\varphi(\begin{smallmatrix} 0 & 0\\ 0 & 1 \end{smallmatrix})+\varphi(\begin{smallmatrix} 1 & 0\\ 0 & 1 \end{smallmatrix})-\varphi(\begin{smallmatrix} 0 & 1\\ 0 & 1 \end{smallmatrix})-\varphi(\begin{smallmatrix} 1 & 1\\ 0 & 1 \end{smallmatrix})-\varphi(\begin{smallmatrix} 0 & 0\\ 1 & 1 \end{smallmatrix})-\varphi(\begin{smallmatrix} 1 & 0\\ 1 & 1 \end{smallmatrix})+\varphi(\begin{smallmatrix} 0 & 1\\ 1 & 1 \end{smallmatrix})+\varphi(\begin{smallmatrix} 1 & 1\\ 1 & 1 \end{smallmatrix})\\
16p(\begin{smallmatrix} 1 & 1\\ 1 & 0 \end{smallmatrix})&=\varphi(\begin{smallmatrix} 0 & 0\\ 0 & 0 \end{smallmatrix})-\varphi(\begin{smallmatrix} 1 & 0\\ 0 & 0 \end{smallmatrix})-\varphi(\begin{smallmatrix} 0 & 1\\ 0 & 0 \end{smallmatrix})+\varphi(\begin{smallmatrix} 1 & 1\\ 0 & 0 \end{smallmatrix})-\varphi(\begin{smallmatrix} 0 & 0\\ 1 & 0 \end{smallmatrix})+\varphi(\begin{smallmatrix} 1 & 0\\ 1 & 0 \end{smallmatrix})+\varphi(\begin{smallmatrix} 0 & 1\\ 1 & 0 \end{smallmatrix})-\varphi(\begin{smallmatrix} 1 & 1\\ 1 & 0 \end{smallmatrix})+\varphi(\begin{smallmatrix} 0 & 0\\ 0 & 1 \end{smallmatrix})-\varphi(\begin{smallmatrix} 1 & 0\\ 0 & 1 \end{smallmatrix})-\varphi(\begin{smallmatrix} 0 & 1\\ 0 & 1 \end{smallmatrix})+\varphi(\begin{smallmatrix} 1 & 1\\ 0 & 1 \end{smallmatrix})-\varphi(\begin{smallmatrix} 0 & 0\\ 1 & 1 \end{smallmatrix})+\varphi(\begin{smallmatrix} 1 & 0\\ 1 & 1 \end{smallmatrix})+\varphi(\begin{smallmatrix} 0 & 1\\ 1 & 1 \end{smallmatrix})-\varphi(\begin{smallmatrix} 1 & 1\\ 1 & 1 \end{smallmatrix})\\
16p(\begin{smallmatrix} 0 & 0\\ 0 & 1 \end{smallmatrix})&=\varphi(\begin{smallmatrix} 0 & 0\\ 0 & 0 \end{smallmatrix})+\varphi(\begin{smallmatrix} 1 & 0\\ 0 & 0 \end{smallmatrix})+\varphi(\begin{smallmatrix} 0 & 1\\ 0 & 0 \end{smallmatrix})+\varphi(\begin{smallmatrix} 1 & 1\\ 0 & 0 \end{smallmatrix})+\varphi(\begin{smallmatrix} 0 & 0\\ 1 & 0 \end{smallmatrix})+\varphi(\begin{smallmatrix} 1 & 0\\ 1 & 0 \end{smallmatrix})+\varphi(\begin{smallmatrix} 0 & 1\\ 1 & 0 \end{smallmatrix})+\varphi(\begin{smallmatrix} 1 & 1\\ 1 & 0 \end{smallmatrix})-\varphi(\begin{smallmatrix} 0 & 0\\ 0 & 1 \end{smallmatrix})-\varphi(\begin{smallmatrix} 1 & 0\\ 0 & 1 \end{smallmatrix})-\varphi(\begin{smallmatrix} 0 & 1\\ 0 & 1 \end{smallmatrix})-\varphi(\begin{smallmatrix} 1 & 1\\ 0 & 1 \end{smallmatrix})-\varphi(\begin{smallmatrix} 0 & 0\\ 1 & 1 \end{smallmatrix})-\varphi(\begin{smallmatrix} 1 & 0\\ 1 & 1 \end{smallmatrix})-\varphi(\begin{smallmatrix} 0 & 1\\ 1 & 1 \end{smallmatrix})-\varphi(\begin{smallmatrix} 1 & 1\\ 1 & 1 \end{smallmatrix})\\
16p(\begin{smallmatrix} 1 & 0\\ 0 & 1 \end{smallmatrix})&=\varphi(\begin{smallmatrix} 0 & 0\\ 0 & 0 \end{smallmatrix})-\varphi(\begin{smallmatrix} 1 & 0\\ 0 & 0 \end{smallmatrix})+\varphi(\begin{smallmatrix} 0 & 1\\ 0 & 0 \end{smallmatrix})-\varphi(\begin{smallmatrix} 1 & 1\\ 0 & 0 \end{smallmatrix})+\varphi(\begin{smallmatrix} 0 & 0\\ 1 & 0 \end{smallmatrix})-\varphi(\begin{smallmatrix} 1 & 0\\ 1 & 0 \end{smallmatrix})+\varphi(\begin{smallmatrix} 0 & 1\\ 1 & 0 \end{smallmatrix})-\varphi(\begin{smallmatrix} 1 & 1\\ 1 & 0 \end{smallmatrix})-\varphi(\begin{smallmatrix} 0 & 0\\ 0 & 1 \end{smallmatrix})+\varphi(\begin{smallmatrix} 1 & 0\\ 0 & 1 \end{smallmatrix})-\varphi(\begin{smallmatrix} 0 & 1\\ 0 & 1 \end{smallmatrix})+\varphi(\begin{smallmatrix} 1 & 1\\ 0 & 1 \end{smallmatrix})-\varphi(\begin{smallmatrix} 0 & 0\\ 1 & 1 \end{smallmatrix})+\varphi(\begin{smallmatrix} 1 & 0\\ 1 & 1 \end{smallmatrix})-\varphi(\begin{smallmatrix} 0 & 1\\ 1 & 1 \end{smallmatrix})+\varphi(\begin{smallmatrix} 1 & 1\\ 1 & 1 \end{smallmatrix})\\
16p(\begin{smallmatrix} 0 & 1\\ 0 & 1 \end{smallmatrix})&=\varphi(\begin{smallmatrix} 0 & 0\\ 0 & 0 \end{smallmatrix})+\varphi(\begin{smallmatrix} 1 & 0\\ 0 & 0 \end{smallmatrix})-\varphi(\begin{smallmatrix} 0 & 1\\ 0 & 0 \end{smallmatrix})-\varphi(\begin{smallmatrix} 1 & 1\\ 0 & 0 \end{smallmatrix})+\varphi(\begin{smallmatrix} 0 & 0\\ 1 & 0 \end{smallmatrix})+\varphi(\begin{smallmatrix} 1 & 0\\ 1 & 0 \end{smallmatrix})-\varphi(\begin{smallmatrix} 0 & 1\\ 1 & 0 \end{smallmatrix})-\varphi(\begin{smallmatrix} 1 & 1\\ 1 & 0 \end{smallmatrix})-\varphi(\begin{smallmatrix} 0 & 0\\ 0 & 1 \end{smallmatrix})-\varphi(\begin{smallmatrix} 1 & 0\\ 0 & 1 \end{smallmatrix})+\varphi(\begin{smallmatrix} 0 & 1\\ 0 & 1 \end{smallmatrix})+\varphi(\begin{smallmatrix} 1 & 1\\ 0 & 1 \end{smallmatrix})-\varphi(\begin{smallmatrix} 0 & 0\\ 1 & 1 \end{smallmatrix})-\varphi(\begin{smallmatrix} 1 & 0\\ 1 & 1 \end{smallmatrix})+\varphi(\begin{smallmatrix} 0 & 1\\ 1 & 1 \end{smallmatrix})+\varphi(\begin{smallmatrix} 1 & 1\\ 1 & 1 \end{smallmatrix})\\
16p(\begin{smallmatrix} 1 & 1\\ 0 & 1 \end{smallmatrix})&=\varphi(\begin{smallmatrix} 0 & 0\\ 0 & 0 \end{smallmatrix})-\varphi(\begin{smallmatrix} 1 & 0\\ 0 & 0 \end{smallmatrix})-\varphi(\begin{smallmatrix} 0 & 1\\ 0 & 0 \end{smallmatrix})+\varphi(\begin{smallmatrix} 1 & 1\\ 0 & 0 \end{smallmatrix})+\varphi(\begin{smallmatrix} 0 & 0\\ 1 & 0 \end{smallmatrix})-\varphi(\begin{smallmatrix} 1 & 0\\ 1 & 0 \end{smallmatrix})-\varphi(\begin{smallmatrix} 0 & 1\\ 1 & 0 \end{smallmatrix})+\varphi(\begin{smallmatrix} 1 & 1\\ 1 & 0 \end{smallmatrix})-\varphi(\begin{smallmatrix} 0 & 0\\ 0 & 1 \end{smallmatrix})+\varphi(\begin{smallmatrix} 1 & 0\\ 0 & 1 \end{smallmatrix})+\varphi(\begin{smallmatrix} 0 & 1\\ 0 & 1 \end{smallmatrix})-\varphi(\begin{smallmatrix} 1 & 1\\ 0 & 1 \end{smallmatrix})-\varphi(\begin{smallmatrix} 0 & 0\\ 1 & 1 \end{smallmatrix})+\varphi(\begin{smallmatrix} 1 & 0\\ 1 & 1 \end{smallmatrix})+\varphi(\begin{smallmatrix} 0 & 1\\ 1 & 1 \end{smallmatrix})-\varphi(\begin{smallmatrix} 1 & 1\\ 1 & 1 \end{smallmatrix})\\
16p(\begin{smallmatrix} 0 & 0\\ 1 & 1 \end{smallmatrix})&=\varphi(\begin{smallmatrix} 0 & 0\\ 0 & 0 \end{smallmatrix})+\varphi(\begin{smallmatrix} 1 & 0\\ 0 & 0 \end{smallmatrix})+\varphi(\begin{smallmatrix} 0 & 1\\ 0 & 0 \end{smallmatrix})+\varphi(\begin{smallmatrix} 1 & 1\\ 0 & 0 \end{smallmatrix})-\varphi(\begin{smallmatrix} 0 & 0\\ 1 & 0 \end{smallmatrix})-\varphi(\begin{smallmatrix} 1 & 0\\ 1 & 0 \end{smallmatrix})-\varphi(\begin{smallmatrix} 0 & 1\\ 1 & 0 \end{smallmatrix})-\varphi(\begin{smallmatrix} 1 & 1\\ 1 & 0 \end{smallmatrix})-\varphi(\begin{smallmatrix} 0 & 0\\ 0 & 1 \end{smallmatrix})-\varphi(\begin{smallmatrix} 1 & 0\\ 0 & 1 \end{smallmatrix})-\varphi(\begin{smallmatrix} 0 & 1\\ 0 & 1 \end{smallmatrix})-\varphi(\begin{smallmatrix} 1 & 1\\ 0 & 1 \end{smallmatrix})+\varphi(\begin{smallmatrix} 0 & 0\\ 1 & 1 \end{smallmatrix})+\varphi(\begin{smallmatrix} 1 & 0\\ 1 & 1 \end{smallmatrix})+\varphi(\begin{smallmatrix} 0 & 1\\ 1 & 1 \end{smallmatrix})+\varphi(\begin{smallmatrix} 1 & 1\\ 1 & 1 \end{smallmatrix})\\
16p(\begin{smallmatrix} 1 & 0\\ 1 & 1 \end{smallmatrix})&=\varphi(\begin{smallmatrix} 0 & 0\\ 0 & 0 \end{smallmatrix})-\varphi(\begin{smallmatrix} 1 & 0\\ 0 & 0 \end{smallmatrix})+\varphi(\begin{smallmatrix} 0 & 1\\ 0 & 0 \end{smallmatrix})-\varphi(\begin{smallmatrix} 1 & 1\\ 0 & 0 \end{smallmatrix})-\varphi(\begin{smallmatrix} 0 & 0\\ 1 & 0 \end{smallmatrix})+\varphi(\begin{smallmatrix} 1 & 0\\ 1 & 0 \end{smallmatrix})-\varphi(\begin{smallmatrix} 0 & 1\\ 1 & 0 \end{smallmatrix})+\varphi(\begin{smallmatrix} 1 & 1\\ 1 & 0 \end{smallmatrix})-\varphi(\begin{smallmatrix} 0 & 0\\ 0 & 1 \end{smallmatrix})+\varphi(\begin{smallmatrix} 1 & 0\\ 0 & 1 \end{smallmatrix})-\varphi(\begin{smallmatrix} 0 & 1\\ 0 & 1 \end{smallmatrix})+\varphi(\begin{smallmatrix} 1 & 1\\ 0 & 1 \end{smallmatrix})+\varphi(\begin{smallmatrix} 0 & 0\\ 1 & 1 \end{smallmatrix})-\varphi(\begin{smallmatrix} 1 & 0\\ 1 & 1 \end{smallmatrix})+\varphi(\begin{smallmatrix} 0 & 1\\ 1 & 1 \end{smallmatrix})-\varphi(\begin{smallmatrix} 1 & 1\\ 1 & 1 \end{smallmatrix})\\
16p(\begin{smallmatrix} 0 & 1\\ 1 & 1 \end{smallmatrix})&=\varphi(\begin{smallmatrix} 0 & 0\\ 0 & 0 \end{smallmatrix})+\varphi(\begin{smallmatrix} 1 & 0\\ 0 & 0 \end{smallmatrix})-\varphi(\begin{smallmatrix} 0 & 1\\ 0 & 0 \end{smallmatrix})-\varphi(\begin{smallmatrix} 1 & 1\\ 0 & 0 \end{smallmatrix})-\varphi(\begin{smallmatrix} 0 & 0\\ 1 & 0 \end{smallmatrix})-\varphi(\begin{smallmatrix} 1 & 0\\ 1 & 0 \end{smallmatrix})+\varphi(\begin{smallmatrix} 0 & 1\\ 1 & 0 \end{smallmatrix})+\varphi(\begin{smallmatrix} 1 & 1\\ 1 & 0 \end{smallmatrix})-\varphi(\begin{smallmatrix} 0 & 0\\ 0 & 1 \end{smallmatrix})-\varphi(\begin{smallmatrix} 1 & 0\\ 0 & 1 \end{smallmatrix})+\varphi(\begin{smallmatrix} 0 & 1\\ 0 & 1 \end{smallmatrix})+\varphi(\begin{smallmatrix} 1 & 1\\ 0 & 1 \end{smallmatrix})+\varphi(\begin{smallmatrix} 0 & 0\\ 1 & 1 \end{smallmatrix})+\varphi(\begin{smallmatrix} 1 & 0\\ 1 & 1 \end{smallmatrix})-\varphi(\begin{smallmatrix} 0 & 1\\ 1 & 1 \end{smallmatrix})-\varphi(\begin{smallmatrix} 1 & 1\\ 1 & 1 \end{smallmatrix})\\
16p(\begin{smallmatrix} 1 & 1\\ 1 & 1 \end{smallmatrix})&=\varphi(\begin{smallmatrix} 0 & 0\\ 0 & 0 \end{smallmatrix})-\varphi(\begin{smallmatrix} 1 & 0\\ 0 & 0 \end{smallmatrix})-\varphi(\begin{smallmatrix} 0 & 1\\ 0 & 0 \end{smallmatrix})+\varphi(\begin{smallmatrix} 1 & 1\\ 0 & 0 \end{smallmatrix})-\varphi(\begin{smallmatrix} 0 & 0\\ 1 & 0 \end{smallmatrix})+\varphi(\begin{smallmatrix} 1 & 0\\ 1 & 0 \end{smallmatrix})+\varphi(\begin{smallmatrix} 0 & 1\\ 1 & 0 \end{smallmatrix})-\varphi(\begin{smallmatrix} 1 & 1\\ 1 & 0 \end{smallmatrix})-\varphi(\begin{smallmatrix} 0 & 0\\ 0 & 1 \end{smallmatrix})+\varphi(\begin{smallmatrix} 1 & 0\\ 0 & 1 \end{smallmatrix})+\varphi(\begin{smallmatrix} 0 & 1\\ 0 & 1 \end{smallmatrix})-\varphi(\begin{smallmatrix} 1 & 1\\ 0 & 1 \end{smallmatrix})+\varphi(\begin{smallmatrix} 0 & 0\\ 1 & 1 \end{smallmatrix})-\varphi(\begin{smallmatrix} 1 & 0\\ 1 & 1 \end{smallmatrix})-\varphi(\begin{smallmatrix} 0 & 1\\ 1 & 1 \end{smallmatrix})+\varphi(\begin{smallmatrix} 1 & 1\\ 1 & 1 \end{smallmatrix})\\
\end{aligned}
$}
\end{equation}

\subsection*{Orbit entropy and invariants of texture mixtures}
Figure~\ref{fig:entbarplot} and Figure~\ref{fig:invbarplot} shows the values for orbit entropy and invariants for all axial binary textures and their planar combination. This information is a reference to visualize their dispersion for all the axial textures, like the examples shown in Figure~\ref{fig:scatteraxial}  and for all two combination. (planar) textures, like the example shown in Figure~\ref{fig:scattercomb}.

\begin{figure}[htb]
\hspace{-1.8cm}
\centering 
\includegraphics[trim = 1cm 4cm 3cm 4cm, clip, scale = 0.9]{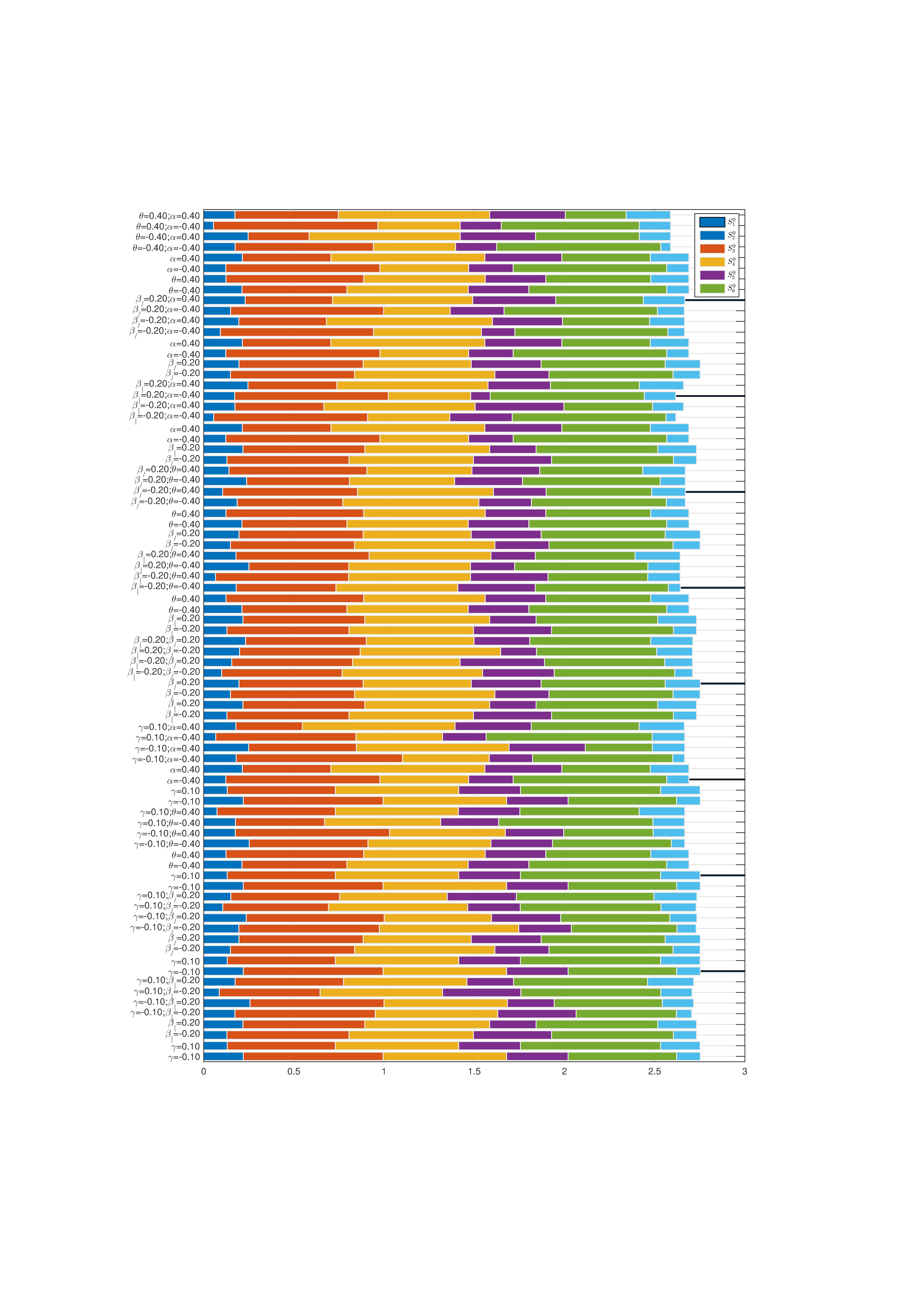}
\caption{Orbit entropies for all axial and planar textures, see Figure~\ref{fig:axial} and Figure~\ref{fig:planar}\label{fig:entbarplot}}
\end{figure}

\begin{figure}[htb]
\hspace{-1.8cm}
\centering
\includegraphics[trim = 1cm 4cm 2cm 4cm, clip, scale = 0.9]{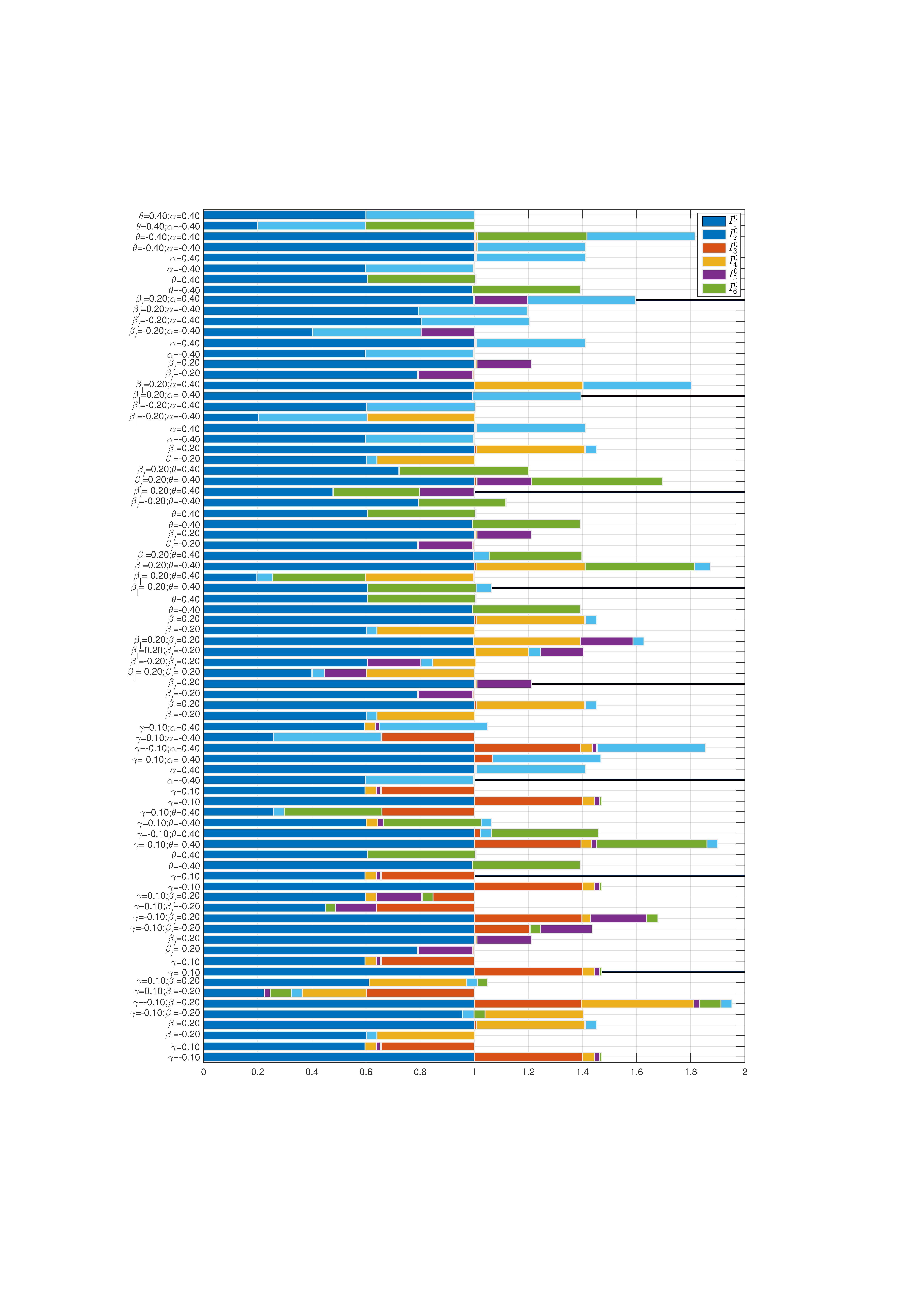}
\caption{Orbit invariants for all axial and planar textures, see Figure~\ref{fig:axial} and Figure~\ref{fig:planar}. \label{fig:invbarplot}}
\end{figure}

\subsection*{Orbits for 3-level, $2\mtimes2$ textures.}
Here summarize the extension of our analysis to 3 levels, which is still compact enough to be displayed in concrete form. In future work we have interest in these $2x2$ patches up to 4 levels and for $3x3$ patches up to 9 levels, allowing all positions to have unique values.   

\begin{table}[htbp]
\centering\vspace{0.5cm}
\begin{tabular}{ll}
a & $(id)$\\
b & $(1,4,10,7)(2,5,11,8)(3,6,12,9)$ \\
c & $(1,10)(2,11)(3,12)$ \\
d & $(1,4)(2,5)(3,6)(7,10)(8,11)(9,12)$ \\
e & $(1,7)(2,8)(3,9)(4,10)(5,11)(6,12)$ \\
f & $(1,10)(2,11)(3,12)(4,7)(5,8)(6,9)$ \\
g & $(1,7,10,4)(2,8,11,5)(3,9,12,6)$ \\
h & $(4,7)(5,8)(6,9)$ \\
a* & $(1,3)(4,6)(7,9)(10,12)$ \\
b* & $(1,6,10,9)(2,5,11,8)(3,4,12,7)$ \\
c* & $(1,12)(2,11)(3,10)(4,6)(7,9)$ \\
d* & $(1,6)(2,5)(3,4)(7,12)(8,11)(9,10)$ \\
e* & $(1,9)(2,8)(3,7)(4,12)(5,11)(6,10)$ \\
f* & $(1,12)(2,11)(3,10)(4,9)(5,8)(6,7)$ \\
g* & $(1,9,10,6)(2,8,11,5)(3,7,12,4)$ \\
h* & $(1,3)(4,9)(5,8)(6,7)(10,12)$ \\
\end{tabular}\vspace{0.5cm}
\caption{\label{tb:3let2el} The elements of $D_4^3 = D_4 \mtimes K_3$, written in cycle notation. See the orbits defined in Table~\ref{tb:3orbs}.}
\end{table}

\begin{table}
\centering\vspace{-3cm}
{\fontfamily{cmtt} \selectfont \tiny
\resizebox{.85\hsize}{!}{
\begin{tabular}{llllllllllllllllll}
ind & quad & a & b & c & d & e & f & g & h & a* & b* & c* & d* & e* & f* & g* & h* \\ \hline\hline
$81$ & \scalebox{0.8}{$\varphi\begin{bsmallmatrix}
2 & 2 \\
2 & 2
\end{bsmallmatrix}$} & $81$ & $81$ & $81$ & $81$ & $81$ & $81$ & $81$ & $81$ & $1$ & $1$ & $1$ & $1$ & $1$ & $1$ & $1$ & $1$ \\
$1$ & \scalebox{0.8}{$\varphi\begin{bsmallmatrix}
0 & 0 \\
0 & 0
\end{bsmallmatrix}$} & $1$ & $1$ & $1$ & $1$ & $1$ & $1$ & $1$ & $1$ & $81$ & $81$ & $81$ & $81$ & $81$ & $81$ & $81$ & $81$ \\ \hline
$80$ & \scalebox{0.8}{$\varphi\begin{bsmallmatrix}
2 & 1 \\
2 & 2
\end{bsmallmatrix}$} & $80$ & $77$ & $80$ & $78$ & $77$ & $79$ & $78$ & $79$ & $2$ & $5$ & $2$ & $3$ & $5$ & $4$ & $3$ & $4$ \\
$78$ & \scalebox{0.8}{$\varphi\begin{bsmallmatrix}
1 & 2 \\
2 & 2
\end{bsmallmatrix}$} & $78$ & $80$ & $77$ & $80$ & $79$ & $77$ & $79$ & $78$ & $3$ & $2$ & $5$ & $2$ & $4$ & $5$ & $4$ & $3$ \\
$79$ & \scalebox{0.8}{$\varphi\begin{bsmallmatrix}
2 & 2 \\
1 & 2
\end{bsmallmatrix}$} & $79$ & $78$ & $79$ & $77$ & $78$ & $80$ & $77$ & $80$ & $4$ & $3$ & $4$ & $5$ & $3$ & $2$ & $5$ & $2$ \\
$77$ & \scalebox{0.8}{$\varphi\begin{bsmallmatrix}
2 & 2 \\
2 & 1
\end{bsmallmatrix}$} & $77$ & $79$ & $78$ & $79$ & $80$ & $78$ & $80$ & $77$ & $5$ & $4$ & $3$ & $4$ & $2$ & $3$ & $2$ & $5$ \\
$5$ & \scalebox{0.8}{$\varphi\begin{bsmallmatrix}
0 & 0 \\
0 & 1
\end{bsmallmatrix}$} & $5$ & $4$ & $3$ & $4$ & $2$ & $3$ & $2$ & $5$ & $77$ & $79$ & $78$ & $79$ & $80$ & $78$ & $80$ & $77$ \\
$3$ & \scalebox{0.8}{$\varphi\begin{bsmallmatrix}
1 & 0 \\
0 & 0
\end{bsmallmatrix}$} & $3$ & $2$ & $5$ & $2$ & $4$ & $5$ & $4$ & $3$ & $78$ & $80$ & $77$ & $80$ & $79$ & $77$ & $79$ & $78$ \\
$4$ & \scalebox{0.8}{$\varphi\begin{bsmallmatrix}
0 & 0 \\
1 & 0
\end{bsmallmatrix}$} & $4$ & $3$ & $4$ & $5$ & $3$ & $2$ & $5$ & $2$ & $79$ & $78$ & $79$ & $77$ & $78$ & $80$ & $77$ & $80$ \\
$2$ & \scalebox{0.8}{$\varphi\begin{bsmallmatrix}
0 & 1 \\
0 & 0
\end{bsmallmatrix}$} & $2$ & $5$ & $2$ & $3$ & $5$ & $4$ & $3$ & $4$ & $80$ & $77$ & $80$ & $78$ & $77$ & $79$ & $78$ & $79$ \\ \hline
$76$ & \scalebox{0.8}{$\varphi\begin{bsmallmatrix}
1 & 1 \\
2 & 2
\end{bsmallmatrix}$} & $76$ & $74$ & $74$ & $76$ & $67$ & $67$ & $69$ & $69$ & $6$ & $14$ & $14$ & $6$ & $12$ & $12$ & $8$ & $8$ \\
$69$ & \scalebox{0.8}{$\varphi\begin{bsmallmatrix}
1 & 2 \\
1 & 2
\end{bsmallmatrix}$} & $69$ & $76$ & $67$ & $74$ & $69$ & $74$ & $67$ & $76$ & $8$ & $6$ & $12$ & $14$ & $8$ & $14$ & $12$ & $6$ \\
$67$ & \scalebox{0.8}{$\varphi\begin{bsmallmatrix}
2 & 2 \\
1 & 1
\end{bsmallmatrix}$} & $67$ & $69$ & $69$ & $67$ & $76$ & $76$ & $74$ & $74$ & $12$ & $8$ & $8$ & $12$ & $6$ & $6$ & $14$ & $14$ \\
$74$ & \scalebox{0.8}{$\varphi\begin{bsmallmatrix}
2 & 1 \\
2 & 1
\end{bsmallmatrix}$} & $74$ & $67$ & $76$ & $69$ & $74$ & $69$ & $76$ & $67$ & $14$ & $12$ & $6$ & $8$ & $14$ & $8$ & $6$ & $12$ \\
$12$ & \scalebox{0.8}{$\varphi\begin{bsmallmatrix}
0 & 0 \\
1 & 1
\end{bsmallmatrix}$} & $12$ & $8$ & $8$ & $12$ & $6$ & $6$ & $14$ & $14$ & $67$ & $69$ & $69$ & $67$ & $76$ & $76$ & $74$ & $74$ \\
$8$ & \scalebox{0.8}{$\varphi\begin{bsmallmatrix}
1 & 0 \\
1 & 0
\end{bsmallmatrix}$} & $8$ & $6$ & $12$ & $14$ & $8$ & $14$ & $12$ & $6$ & $69$ & $76$ & $67$ & $74$ & $69$ & $74$ & $67$ & $76$ \\
$14$ & \scalebox{0.8}{$\varphi\begin{bsmallmatrix}
0 & 1 \\
0 & 1
\end{bsmallmatrix}$} & $14$ & $12$ & $6$ & $8$ & $14$ & $8$ & $6$ & $12$ & $74$ & $67$ & $76$ & $69$ & $74$ & $69$ & $76$ & $67$ \\
$6$ & \scalebox{0.8}{$\varphi\begin{bsmallmatrix}
1 & 1 \\
0 & 0
\end{bsmallmatrix}$} & $6$ & $14$ & $14$ & $6$ & $12$ & $12$ & $8$ & $8$ & $76$ & $74$ & $74$ & $76$ & $67$ & $67$ & $69$ & $69$ \\ \hline
$75$ & \scalebox{0.8}{$\varphi\begin{bsmallmatrix}
2 & 2 \\
2 & 0
\end{bsmallmatrix}$} & $75$ & $68$ & $70$ & $68$ & $73$ & $70$ & $73$ & $75$ & $7$ & $15$ & $9$ & $15$ & $11$ & $9$ & $11$ & $7$ \\
$70$ & \scalebox{0.8}{$\varphi\begin{bsmallmatrix}
0 & 2 \\
2 & 2
\end{bsmallmatrix}$} & $70$ & $73$ & $75$ & $73$ & $68$ & $75$ & $68$ & $70$ & $9$ & $11$ & $7$ & $11$ & $15$ & $7$ & $15$ & $9$ \\
$73$ & \scalebox{0.8}{$\varphi\begin{bsmallmatrix}
2 & 0 \\
2 & 2
\end{bsmallmatrix}$} & $73$ & $75$ & $73$ & $70$ & $75$ & $68$ & $70$ & $68$ & $11$ & $7$ & $11$ & $9$ & $7$ & $15$ & $9$ & $15$ \\
$68$ & \scalebox{0.8}{$\varphi\begin{bsmallmatrix}
2 & 2 \\
0 & 2
\end{bsmallmatrix}$} & $68$ & $70$ & $68$ & $75$ & $70$ & $73$ & $75$ & $73$ & $15$ & $9$ & $15$ & $7$ & $9$ & $11$ & $7$ & $11$ \\
$15$ & \scalebox{0.8}{$\varphi\begin{bsmallmatrix}
0 & 0 \\
2 & 0
\end{bsmallmatrix}$} & $15$ & $9$ & $15$ & $7$ & $9$ & $11$ & $7$ & $11$ & $68$ & $70$ & $68$ & $75$ & $70$ & $73$ & $75$ & $73$ \\
$9$ & \scalebox{0.8}{$\varphi\begin{bsmallmatrix}
2 & 0 \\
0 & 0
\end{bsmallmatrix}$} & $9$ & $11$ & $7$ & $11$ & $15$ & $7$ & $15$ & $9$ & $70$ & $73$ & $75$ & $73$ & $68$ & $75$ & $68$ & $70$ \\
$11$ & \scalebox{0.8}{$\varphi\begin{bsmallmatrix}
0 & 2 \\
0 & 0
\end{bsmallmatrix}$} & $11$ & $7$ & $11$ & $9$ & $7$ & $15$ & $9$ & $15$ & $73$ & $75$ & $73$ & $70$ & $75$ & $68$ & $70$ & $68$ \\
$7$ & \scalebox{0.8}{$\varphi\begin{bsmallmatrix}
0 & 0 \\
0 & 2
\end{bsmallmatrix}$} & $7$ & $15$ & $9$ & $15$ & $11$ & $9$ & $11$ & $7$ & $75$ & $68$ & $70$ & $68$ & $73$ & $70$ & $73$ & $75$ \\ \hline
$71$ & \scalebox{0.8}{$\varphi\begin{bsmallmatrix}
2 & 1 \\
1 & 2
\end{bsmallmatrix}$} & $71$ & $72$ & $71$ & $72$ & $72$ & $71$ & $72$ & $71$ & $10$ & $13$ & $10$ & $13$ & $13$ & $10$ & $13$ & $10$ \\
$72$ & \scalebox{0.8}{$\varphi\begin{bsmallmatrix}
1 & 2 \\
2 & 1
\end{bsmallmatrix}$} & $72$ & $71$ & $72$ & $71$ & $71$ & $72$ & $71$ & $72$ & $13$ & $10$ & $13$ & $10$ & $10$ & $13$ & $10$ & $13$ \\
$10$ & \scalebox{0.8}{$\varphi\begin{bsmallmatrix}
0 & 1 \\
1 & 0
\end{bsmallmatrix}$} & $10$ & $13$ & $10$ & $13$ & $13$ & $10$ & $13$ & $10$ & $71$ & $72$ & $71$ & $72$ & $72$ & $71$ & $72$ & $71$ \\
$13$ & \scalebox{0.8}{$\varphi\begin{bsmallmatrix}
1 & 0 \\
0 & 1
\end{bsmallmatrix}$} & $13$ & $10$ & $13$ & $10$ & $10$ & $13$ & $10$ & $13$ & $72$ & $71$ & $72$ & $71$ & $71$ & $72$ & $71$ & $72$ \\ \hline
$65$ & \scalebox{0.8}{$\varphi\begin{bsmallmatrix}
2 & 0 \\
1 & 2
\end{bsmallmatrix}$} & $65$ & $55$ & $65$ & $63$ & $55$ & $56$ & $63$ & $56$ & $16$ & $21$ & $16$ & $28$ & $21$ & $23$ & $28$ & $23$ \\
$55$ & \scalebox{0.8}{$\varphi\begin{bsmallmatrix}
1 & 2 \\
2 & 0
\end{bsmallmatrix}$} & $55$ & $56$ & $63$ & $56$ & $65$ & $63$ & $65$ & $55$ & $21$ & $23$ & $28$ & $23$ & $16$ & $28$ & $16$ & $21$ \\
$56$ & \scalebox{0.8}{$\varphi\begin{bsmallmatrix}
2 & 1 \\
0 & 2
\end{bsmallmatrix}$} & $56$ & $63$ & $56$ & $55$ & $63$ & $65$ & $55$ & $65$ & $23$ & $28$ & $23$ & $21$ & $28$ & $16$ & $21$ & $16$ \\
$63$ & \scalebox{0.8}{$\varphi\begin{bsmallmatrix}
0 & 2 \\
2 & 1
\end{bsmallmatrix}$} & $63$ & $65$ & $55$ & $65$ & $56$ & $55$ & $56$ & $63$ & $28$ & $16$ & $21$ & $16$ & $23$ & $21$ & $23$ & $28$ \\
$21$ & \scalebox{0.8}{$\varphi\begin{bsmallmatrix}
1 & 0 \\
0 & 2
\end{bsmallmatrix}$} & $21$ & $23$ & $28$ & $23$ & $16$ & $28$ & $16$ & $21$ & $55$ & $56$ & $63$ & $56$ & $65$ & $63$ & $65$ & $55$ \\
$23$ & \scalebox{0.8}{$\varphi\begin{bsmallmatrix}
0 & 1 \\
2 & 0
\end{bsmallmatrix}$} & $23$ & $28$ & $23$ & $21$ & $28$ & $16$ & $21$ & $16$ & $56$ & $63$ & $56$ & $55$ & $63$ & $65$ & $55$ & $65$ \\
$28$ & \scalebox{0.8}{$\varphi\begin{bsmallmatrix}
2 & 0 \\
0 & 1
\end{bsmallmatrix}$} & $28$ & $16$ & $21$ & $16$ & $23$ & $21$ & $23$ & $28$ & $63$ & $65$ & $55$ & $65$ & $56$ & $55$ & $56$ & $63$ \\
$16$ & \scalebox{0.8}{$\varphi\begin{bsmallmatrix}
0 & 2 \\
1 & 0
\end{bsmallmatrix}$} & $16$ & $21$ & $16$ & $28$ & $21$ & $23$ & $28$ & $23$ & $65$ & $55$ & $65$ & $63$ & $55$ & $56$ & $63$ & $56$ \\ \hline
$53$ & \scalebox{0.8}{$\varphi\begin{bsmallmatrix}
2 & 1 \\
2 & 0
\end{bsmallmatrix}$} & $53$ & $57$ & $58$ & $61$ & $66$ & $60$ & $59$ & $62$ & $17$ & $18$ & $25$ & $27$ & $26$ & $22$ & $19$ & $30$ \\
$57$ & \scalebox{0.8}{$\varphi\begin{bsmallmatrix}
2 & 2 \\
0 & 1
\end{bsmallmatrix}$} & $57$ & $60$ & $61$ & $62$ & $58$ & $59$ & $53$ & $66$ & $18$ & $22$ & $27$ & $30$ & $25$ & $19$ & $17$ & $26$ \\
$59$ & \scalebox{0.8}{$\varphi\begin{bsmallmatrix}
1 & 0 \\
2 & 2
\end{bsmallmatrix}$} & $59$ & $53$ & $66$ & $58$ & $62$ & $57$ & $60$ & $61$ & $19$ & $17$ & $26$ & $25$ & $30$ & $18$ & $22$ & $27$ \\
$60$ & \scalebox{0.8}{$\varphi\begin{bsmallmatrix}
0 & 2 \\
1 & 2
\end{bsmallmatrix}$} & $60$ & $59$ & $62$ & $66$ & $61$ & $53$ & $57$ & $58$ & $22$ & $19$ & $30$ & $26$ & $27$ & $17$ & $18$ & $25$ \\
$58$ & \scalebox{0.8}{$\varphi\begin{bsmallmatrix}
0 & 1 \\
2 & 2
\end{bsmallmatrix}$} & $58$ & $66$ & $53$ & $59$ & $57$ & $62$ & $61$ & $60$ & $25$ & $26$ & $17$ & $19$ & $18$ & $30$ & $27$ & $22$ \\
$66$ & \scalebox{0.8}{$\varphi\begin{bsmallmatrix}
2 & 0 \\
2 & 1
\end{bsmallmatrix}$} & $66$ & $62$ & $59$ & $60$ & $53$ & $61$ & $58$ & $57$ & $26$ & $30$ & $19$ & $22$ & $17$ & $27$ & $25$ & $18$ \\
$61$ & \scalebox{0.8}{$\varphi\begin{bsmallmatrix}
1 & 2 \\
0 & 2
\end{bsmallmatrix}$} & $61$ & $58$ & $57$ & $53$ & $60$ & $66$ & $62$ & $59$ & $27$ & $25$ & $18$ & $17$ & $22$ & $26$ & $30$ & $19$ \\
$62$ & \scalebox{0.8}{$\varphi\begin{bsmallmatrix}
2 & 2 \\
1 & 0
\end{bsmallmatrix}$} & $62$ & $61$ & $60$ & $57$ & $59$ & $58$ & $66$ & $53$ & $30$ & $27$ & $22$ & $18$ & $19$ & $25$ & $26$ & $17$ \\
$17$ & \scalebox{0.8}{$\varphi\begin{bsmallmatrix}
0 & 1 \\
0 & 2
\end{bsmallmatrix}$} & $17$ & $18$ & $25$ & $27$ & $26$ & $22$ & $19$ & $30$ & $53$ & $57$ & $58$ & $61$ & $66$ & $60$ & $59$ & $62$ \\
$18$ & \scalebox{0.8}{$\varphi\begin{bsmallmatrix}
0 & 0 \\
2 & 1
\end{bsmallmatrix}$} & $18$ & $22$ & $27$ & $30$ & $25$ & $19$ & $17$ & $26$ & $57$ & $60$ & $61$ & $62$ & $58$ & $59$ & $53$ & $66$ \\
$25$ & \scalebox{0.8}{$\varphi\begin{bsmallmatrix}
2 & 1 \\
0 & 0
\end{bsmallmatrix}$} & $25$ & $26$ & $17$ & $19$ & $18$ & $30$ & $27$ & $22$ & $58$ & $66$ & $53$ & $59$ & $57$ & $62$ & $61$ & $60$ \\
$19$ & \scalebox{0.8}{$\varphi\begin{bsmallmatrix}
1 & 2 \\
0 & 0
\end{bsmallmatrix}$} & $19$ & $17$ & $26$ & $25$ & $30$ & $18$ & $22$ & $27$ & $59$ & $53$ & $66$ & $58$ & $62$ & $57$ & $60$ & $61$ \\
$22$ & \scalebox{0.8}{$\varphi\begin{bsmallmatrix}
2 & 0 \\
1 & 0
\end{bsmallmatrix}$} & $22$ & $19$ & $30$ & $26$ & $27$ & $17$ & $18$ & $25$ & $60$ & $59$ & $62$ & $66$ & $61$ & $53$ & $57$ & $58$ \\
$27$ & \scalebox{0.8}{$\varphi\begin{bsmallmatrix}
1 & 0 \\
2 & 0
\end{bsmallmatrix}$} & $27$ & $25$ & $18$ & $17$ & $22$ & $26$ & $30$ & $19$ & $61$ & $58$ & $57$ & $53$ & $60$ & $66$ & $62$ & $59$ \\
$30$ & \scalebox{0.8}{$\varphi\begin{bsmallmatrix}
0 & 0 \\
1 & 2
\end{bsmallmatrix}$} & $30$ & $27$ & $22$ & $18$ & $19$ & $25$ & $26$ & $17$ & $62$ & $61$ & $60$ & $57$ & $59$ & $58$ & $66$ & $53$ \\
$26$ & \scalebox{0.8}{$\varphi\begin{bsmallmatrix}
0 & 2 \\
0 & 1
\end{bsmallmatrix}$} & $26$ & $30$ & $19$ & $22$ & $17$ & $27$ & $25$ & $18$ & $66$ & $62$ & $59$ & $60$ & $53$ & $61$ & $58$ & $57$ \\ \hline
$52$ & \scalebox{0.8}{$\varphi\begin{bsmallmatrix}
1 & 2 \\
1 & 1
\end{bsmallmatrix}$} & $52$ & $51$ & $52$ & $64$ & $51$ & $54$ & $64$ & $54$ & $20$ & $31$ & $20$ & $24$ & $31$ & $29$ & $24$ & $29$ \\
$64$ & \scalebox{0.8}{$\varphi\begin{bsmallmatrix}
2 & 1 \\
1 & 1
\end{bsmallmatrix}$} & $64$ & $52$ & $51$ & $52$ & $54$ & $51$ & $54$ & $64$ & $24$ & $20$ & $31$ & $20$ & $29$ & $31$ & $29$ & $24$ \\
$54$ & \scalebox{0.8}{$\varphi\begin{bsmallmatrix}
1 & 1 \\
2 & 1
\end{bsmallmatrix}$} & $54$ & $64$ & $54$ & $51$ & $64$ & $52$ & $51$ & $52$ & $29$ & $24$ & $29$ & $31$ & $24$ & $20$ & $31$ & $20$ \\
$51$ & \scalebox{0.8}{$\varphi\begin{bsmallmatrix}
1 & 1 \\
1 & 2
\end{bsmallmatrix}$} & $51$ & $54$ & $64$ & $54$ & $52$ & $64$ & $52$ & $51$ & $31$ & $29$ & $24$ & $29$ & $20$ & $24$ & $20$ & $31$ \\
$31$ & \scalebox{0.8}{$\varphi\begin{bsmallmatrix}
1 & 1 \\
1 & 0
\end{bsmallmatrix}$} & $31$ & $29$ & $24$ & $29$ & $20$ & $24$ & $20$ & $31$ & $51$ & $54$ & $64$ & $54$ & $52$ & $64$ & $52$ & $51$ \\
$20$ & \scalebox{0.8}{$\varphi\begin{bsmallmatrix}
1 & 0 \\
1 & 1
\end{bsmallmatrix}$} & $20$ & $31$ & $20$ & $24$ & $31$ & $29$ & $24$ & $29$ & $52$ & $51$ & $52$ & $64$ & $51$ & $54$ & $64$ & $54$ \\
$29$ & \scalebox{0.8}{$\varphi\begin{bsmallmatrix}
1 & 1 \\
0 & 1
\end{bsmallmatrix}$} & $29$ & $24$ & $29$ & $31$ & $24$ & $20$ & $31$ & $20$ & $54$ & $64$ & $54$ & $51$ & $64$ & $52$ & $51$ & $52$ \\
$24$ & \scalebox{0.8}{$\varphi\begin{bsmallmatrix}
0 & 1 \\
1 & 1
\end{bsmallmatrix}$} & $24$ & $20$ & $31$ & $20$ & $29$ & $31$ & $29$ & $24$ & $64$ & $52$ & $51$ & $52$ & $54$ & $51$ & $54$ & $64$ \\ \hline
$37$ & \scalebox{0.8}{$\varphi\begin{bsmallmatrix}
2 & 1 \\
1 & 0
\end{bsmallmatrix}$} & $37$ & $50$ & $32$ & $50$ & $35$ & $32$ & $35$ & $37$ & $32$ & $35$ & $37$ & $35$ & $50$ & $37$ & $50$ & $32$ \\
$50$ & \scalebox{0.8}{$\varphi\begin{bsmallmatrix}
1 & 2 \\
0 & 1
\end{bsmallmatrix}$} & $50$ & $32$ & $50$ & $37$ & $32$ & $35$ & $37$ & $35$ & $35$ & $37$ & $35$ & $32$ & $37$ & $50$ & $32$ & $50$ \\
$32$ & \scalebox{0.8}{$\varphi\begin{bsmallmatrix}
0 & 1 \\
1 & 2
\end{bsmallmatrix}$} & $32$ & $35$ & $37$ & $35$ & $50$ & $37$ & $50$ & $32$ & $37$ & $50$ & $32$ & $50$ & $35$ & $32$ & $35$ & $37$ \\
$35$ & \scalebox{0.8}{$\varphi\begin{bsmallmatrix}
1 & 0 \\
2 & 1
\end{bsmallmatrix}$} & $35$ & $37$ & $35$ & $32$ & $37$ & $50$ & $32$ & $50$ & $50$ & $32$ & $50$ & $37$ & $32$ & $35$ & $37$ & $35$ \\ \hline
$38$ & \scalebox{0.8}{$\varphi\begin{bsmallmatrix}
0 & 2 \\
0 & 2
\end{bsmallmatrix}$} & $38$ & $42$ & $47$ & $33$ & $38$ & $33$ & $47$ & $42$ & $33$ & $47$ & $42$ & $38$ & $33$ & $38$ & $42$ & $47$ \\
$33$ & \scalebox{0.8}{$\varphi\begin{bsmallmatrix}
2 & 0 \\
2 & 0
\end{bsmallmatrix}$} & $33$ & $47$ & $42$ & $38$ & $33$ & $38$ & $42$ & $47$ & $38$ & $42$ & $47$ & $33$ & $38$ & $33$ & $47$ & $42$ \\
$47$ & \scalebox{0.8}{$\varphi\begin{bsmallmatrix}
2 & 2 \\
0 & 0
\end{bsmallmatrix}$} & $47$ & $38$ & $38$ & $47$ & $42$ & $42$ & $33$ & $33$ & $42$ & $33$ & $33$ & $42$ & $47$ & $47$ & $38$ & $38$ \\
$42$ & \scalebox{0.8}{$\varphi\begin{bsmallmatrix}
0 & 0 \\
2 & 2
\end{bsmallmatrix}$} & $42$ & $33$ & $33$ & $42$ & $47$ & $47$ & $38$ & $38$ & $47$ & $38$ & $38$ & $47$ & $42$ & $42$ & $33$ & $33$ \\ \hline
$48$ & \scalebox{0.8}{$\varphi\begin{bsmallmatrix}
0 & 2 \\
1 & 1
\end{bsmallmatrix}$} & $48$ & $49$ & $43$ & $34$ & $39$ & $41$ & $45$ & $44$ & $34$ & $43$ & $49$ & $48$ & $41$ & $39$ & $44$ & $45$ \\
$41$ & \scalebox{0.8}{$\varphi\begin{bsmallmatrix}
1 & 1 \\
2 & 0
\end{bsmallmatrix}$} & $41$ & $45$ & $44$ & $39$ & $34$ & $48$ & $49$ & $43$ & $39$ & $44$ & $45$ & $41$ & $48$ & $34$ & $43$ & $49$ \\
$39$ & \scalebox{0.8}{$\varphi\begin{bsmallmatrix}
1 & 1 \\
0 & 2
\end{bsmallmatrix}$} & $39$ & $44$ & $45$ & $41$ & $48$ & $34$ & $43$ & $49$ & $41$ & $45$ & $44$ & $39$ & $34$ & $48$ & $49$ & $43$ \\
$49$ & \scalebox{0.8}{$\varphi\begin{bsmallmatrix}
1 & 0 \\
1 & 2
\end{bsmallmatrix}$} & $49$ & $41$ & $34$ & $44$ & $43$ & $45$ & $48$ & $39$ & $43$ & $39$ & $48$ & $45$ & $49$ & $44$ & $34$ & $41$ \\
$45$ & \scalebox{0.8}{$\varphi\begin{bsmallmatrix}
2 & 1 \\
0 & 1
\end{bsmallmatrix}$} & $45$ & $48$ & $39$ & $43$ & $44$ & $49$ & $41$ & $34$ & $44$ & $34$ & $41$ & $49$ & $45$ & $43$ & $39$ & $48$ \\
$44$ & \scalebox{0.8}{$\varphi\begin{bsmallmatrix}
0 & 1 \\
2 & 1
\end{bsmallmatrix}$} & $44$ & $34$ & $41$ & $49$ & $45$ & $43$ & $39$ & $48$ & $45$ & $48$ & $39$ & $43$ & $44$ & $49$ & $41$ & $34$ \\
$34$ & \scalebox{0.8}{$\varphi\begin{bsmallmatrix}
2 & 0 \\
1 & 1
\end{bsmallmatrix}$} & $34$ & $43$ & $49$ & $48$ & $41$ & $39$ & $44$ & $45$ & $48$ & $49$ & $43$ & $34$ & $39$ & $41$ & $45$ & $44$ \\
$43$ & \scalebox{0.8}{$\varphi\begin{bsmallmatrix}
1 & 2 \\
1 & 0
\end{bsmallmatrix}$} & $43$ & $39$ & $48$ & $45$ & $49$ & $44$ & $34$ & $41$ & $49$ & $41$ & $34$ & $44$ & $43$ & $45$ & $48$ & $39$ \\ \hline
$36$ & \scalebox{0.8}{$\varphi\begin{bsmallmatrix}
1 & 1 \\
1 & 1
\end{bsmallmatrix}$} & $36$ & $36$ & $36$ & $36$ & $36$ & $36$ & $36$ & $36$ & $36$ & $36$ & $36$ & $36$ & $36$ & $36$ & $36$ & $36$ \\ \hline
$46$ & \scalebox{0.8}{$\varphi\begin{bsmallmatrix}
0 & 2 \\
2 & 0
\end{bsmallmatrix}$} & $46$ & $40$ & $46$ & $40$ & $40$ & $46$ & $40$ & $46$ & $40$ & $46$ & $40$ & $46$ & $46$ & $40$ & $46$ & $40$ \\
$40$ & \scalebox{0.8}{$\varphi\begin{bsmallmatrix}
2 & 0 \\
0 & 2
\end{bsmallmatrix}$} & $40$ & $46$ & $40$ & $46$ & $46$ & $40$ & $46$ & $40$ & $46$ & $40$ & $46$ & $40$ & $40$ & $46$ & $40$ & $46$ \\
\end{tabular}
}}\vspace{.5cm}
\caption{\label{tb:3orbs} Orbits from the action of $D_4^3 = D_4\mtimes K_3$ on ternary quads; See translation of elements in Table~\ref{tb:3let2el}.}
\vspace{-2.5cm}
\end{table}

\clearpage
\subsection*{Abelian Fourier transform from the group character table.}
\label{sec:fftfromct}
There is an ordering of group elements that would simplify the construction of the Fourier transform for an arbitrary Abelian group. With this ordering, it is straightforward to read the Fourier transform from the character table for the following reasons. Because group $A$ is Abelian, all its representations are unidimensional. Therefore, all representations coincide with their characters. So to perform an Fourier transform on group $A = C_2^{\mtimes 4}$ using Equation~\ref{eq:fft} we can use the group characters for the irreducible representations. The whole transform is then just a matter of multiplying the bases in Equation~\ref{eq:quads} by the character table (as a matrix) of the group $A$ to create the Fourier transform.

The ordering above of group elements, or equivalently the perception space coordinates, makes explicit the fact that this multidimensional commutative Fourier transform is actually a special case of a Fourier transform known as the Walsh-Hadamard-Rademacher transform~\cite{kunzEquivalenceOneDimensionalDiscrete1979}, implemented by a natural order Hadamard matrix as $\varphi = H(16)p$. Another ordering of this same transform yields what is called a sequence ordered Hadamard matrix or in some corners a Walsh Matrix. Yet other orderings would change the pattern on this matrix and highlight some other fundamental property of this transform. We show that the order in~\cite{victorLocalImageStatistics2012}, can be obtained from an alternative derivation by making use of the character table for the Abelian group $A = C_2^{4\mtimes4}$, see {\bf SI}. The character table list elements of $G$ by equivalent classes in columns and irreps in rows.

\begin{table}[htbp]
\centering
{\fontfamily{cmtt}\selectfont\footnotesize
\begin{tabular}{r|rrrrrrrrrrrrrrrr}
& 1 & 2 & 3 & 4 & 5 & 6 & 7 & 8 & 9 & 10 & 11 & 12 & 13 & 14 & 15 & 16 \\
\hline
 1 & 1 & 1 & 1 & 1 & 1 & 1 & 1 & 1 & 1 & 1 & 1 & 1 & 1 & 1 & 1 & 1 \\
2 & 1 & -1 & -1 & 1 & -1 & 1 & 1 & -1 & -1 & 1 & 1 & -1 & 1 & -1 & -1 & 1 \\
3 & 1 & -1 & -1 & 1 & -1 & 1 & 1 & -1 & 1 & -1 & -1 & 1 & -1 & 1 & 1 & -1 \\
4 & 1 & -1 & -1 & 1 & 1 & -1 & -1 & 1 & -1 & 1 & 1 & -1 & -1 & 1 & 1 & -1 \\
5 & 1 & -1 & -1 & 1 & 1 & -1 & -1 & 1 & 1 & -1 & -1 & 1 & 1 & -1 & -1 & 1 \\
6 & 1 & -1 & 1 & -1 & -1 & 1 & -1 & 1 & -1 & 1 & -1 & 1 & 1 & -1 & 1 & -1 \\
7 & 1 & -1 & 1 & -1 & -1 & 1 & -1 & 1 & 1 & -1 & 1 & -1 & -1 & 1 & -1 & 1 \\
8 & 1 & -1 & 1 & -1 & 1 & -1 & 1 & -1 & -1 & 1 & -1 & 1 & -1 & 1 & -1 & 1 \\
9 & 1 & -1 & 1 & -1 & 1 & -1 & 1 & -1 & 1 & -1 & 1 & -1 & 1 & -1 & 1 & -1 \\
10 & 1 & 1 & -1 & -1 & -1 & -1 & 1 & 1 & -1 & -1 & 1 & 1 & 1 & 1 & -1 & -1 \\
11 & 1 & 1 & -1 & -1 & -1 & -1 & 1 & 1 & 1 & 1 & -1 & -1 & -1 & -1 & 1 & 1 \\
12 & 1 & 1 & -1 & -1 & 1 & 1 & -1 & -1 & -1 & -1 & 1 & 1 & -1 & -1 & 1 & 1 \\
13 & 1 & 1 & -1 & -1 & 1 & 1 & -1 & -1 & 1 & 1 & -1 & -1 & 1 & 1 & -1 & -1 \\
14 & 1 & 1 & 1 & 1 & -1 & -1 & -1 & -1 & -1 & -1 & -1 & -1 & 1 & 1 & 1 & 1 \\
15 & 1 & 1 & 1 & 1 & -1 & -1 & -1 & -1 & 1 & 1 & 1 & 1 & -1 & -1 & -1 & -1 \\
16 & 1 & 1 & 1 & 1 & 1 & 1 & 1 & 1 & -1 & -1 & -1 & -1 & -1 & -1 & -1 & -1
\end{tabular}
}
\caption{Character Table of $A = C_2^{\mtimes 4}$ can be used to calculate the Fourier transform on $A$ via matrix multiplication. There is a particular ordering of the group elements by conjugacy classes representatives in the columns of the table. The rows follow a particular ordering of the irreducible representations.} \label{tb:charactertable}
\end{table}
\begin{table}[htbp] 
\centering
{\fontfamily{cmtt} \selectfont \footnotesize 
\begin{tabular}{r|rrrrrrrrrrrrrrrr}
 & 1 & 9 & 5 & 13 & 3 & 11 & 7 & 15 & 2 & 10 & 6 & 14 & 4 & 12 & 8 & 16 \\
\hline
 1 & 1 & 1 & 1 & 1 & 1 & 1 & 1 & 1 & 1 & 1 & 1 & 1 & 1 & 1 & 1 & 1 \\
16 & 1 & -1 & 1 & -1 & 1 & -1 & 1 & -1 & 1 & -1 & 1 & -1 & 1 & -1 & 1 & -1 \\
15 & 1 & 1 & -1 & -1 & 1 & 1 & -1 & -1 & 1 & 1 & -1 & -1 & 1 & 1 & -1 & -1 \\
14 & 1 & -1 & -1 & 1 & 1 & -1 & -1 & 1 & 1 & -1 & -1 & 1 & 1 & -1 & -1 & 1 \\
13 & 1 & 1 & 1 & 1 & -1 & -1 & -1 & -1 & 1 & 1 & 1 & 1 & -1 & -1 & -1 & -1 \\
12 & 1 & -1 & 1 & -1 & -1 & 1 & -1 & 1 & 1 & -1 & 1 & -1 & -1 & 1 & -1 & 1 \\
11 & 1 & 1 & -1 & -1 & -1 & -1 & 1 & 1 & 1 & 1 & -1 & -1 & -1 & -1 & 1 & 1 \\
10 & 1 & -1 & -1 & 1 & -1 & 1 & 1 & -1 & 1 & -1 & -1 & 1 & -1 & 1 & 1 & -1 \\
9 & 1 & 1 & 1 & 1 & 1 & 1 & 1 & 1 & -1 & -1 & -1 & -1 & -1 & -1 & -1 & -1 \\
8 & 1 & -1 & 1 & -1 & 1 & -1 & 1 & -1 & -1 & 1 & -1 & 1 & -1 & 1 & -1 & 1 \\
7 & 1 & 1 & -1 & -1 & 1 & 1 & -1 & -1 & -1 & -1 & 1 & 1 & -1 & -1 & 1 & 1 \\
6 & 1 & -1 & -1 & 1 & 1 & -1 & -1 & 1 & -1 & 1 & 1 & -1 & -1 & 1 & 1 & -1 \\
5 & 1 & 1 & 1 & 1 & -1 & -1 & -1 & -1 & -1 & -1 & -1 & -1 & 1 & 1 & 1 & 1 \\
4 & 1 & -1 & 1 & -1 & -1 & 1 & -1 & 1 & -1 & 1 & -1 & 1 & 1 & -1 & 1 & -1 \\
3 & 1 & 1 & -1 & -1 & -1 & -1 & 1 & 1 & -1 & -1 & 1 & 1 & 1 & 1 & -1 & -1 \\
2 & 1 & -1 & -1 & 1 & -1 & 1 & 1 & -1 & -1 & 1 & 1 & -1 & 1 & -1 & -1 & 1
\end{tabular}}
\caption{Reordered character table of $A = C_2^{\mtimes 4}$. The Fourier Transform (Equation~\ref{eq:fft}), is equivalent to a product of a matrix formed by the entries of this table and a vector of patch probabilities. }
\label{tb:reorderedct}
\end{table}

\small
\clearpage
\bibliographystyle{unsrt}  
\bibliography{dihedral.bib}

\end{document}